\newcommand{\cC}{\mathcal{C}}
\newcommand{\cE}{\mathcal{E}}
\newcommand{\cG}{\mathcal{G}}
\newcommand{\cH}{\mathcal{H}}
\newcommand{\cP}{\mathcal{P}}
\newcommand{\cQ}{\mathcal{Q}}
\newcommand{\cR}{\mathcal{R}}
\newcommand{\cT}{\mathcal{T}}
\newcommand{\tr}{\text{tr}}
\newtheorem{theorem}{Theorem}
\begin{document}

\title{Pretty good state transfer via adaptive quantum error correction}
\author{Akshaya Jayashankar}
\affiliation{Department of Physics, Indian Institute of Technology Madras, Chennai, India~600036}
\author{Prabha Mandayam}
\affiliation{Department of Physics, Indian Institute of Technology Madras, Chennai, India~600036}

\begin{abstract}
We examine the role of quantum error correction (QEC) in achieving pretty good quantum state transfer over a class of $1$-d spin Hamiltonians. Recasting the problem of state transfer as one of information transmission over an underlying quantum channel, we identify an adaptive QEC protocol that achieves pretty good state transfer. Using an adaptive recovery and approximate QEC code, we obtain explicit analytical and numerical results for the fidelity of transfer over ideal and disordered $1$-d Heisenberg chains. In the case of a disordered chain, we study the distribution of the transition amplitude, which in turn quantifies the stochastic noise in the underlying quantum channel. Our analysis helps us to suitably modify the QEC protocol so as to ensure pretty good state transfer for small disorder strengths and indicates a threshold beyond which QEC does not help in improving the fidelity of state transfer. 
\end{abstract}

\maketitle
\section{Introduction}

Quantum communication entails transmission of an arbitrary quantum state from one spatial location to another. Spin chains are a natural medium for quantum state transfer over short distances, with the dynamics of the transfer being governed by the Hamiltonian describing the spin-spin interactions along the chains. Starting with the original proposal by Bose~\cite{bose} for state transfer via a $1$-d Heisenberg chain, several protocols have been developed for {\it perfect} as well as {\it pretty good} quantum state transfer via spin chains. 

Perfect state transfer protocols typically involve engineering the coupling strengths between the spins in such a way as to ensure perfect fidelity between the state of sender's spin and that of the receiver's spin~\cite{christandl,christandl2005perfect,albanesemirror,karbach,di}. Alternately, there have been proposals to use multiple spin chains in parallel, and apply appropriate encoding and decoding operations at the sender and receiver's spins so as to transmit the state perfectly~\cite{conclusive,perfect,efficient}. Experimentally, perfect state transfer protocols have been implemented in various architectures including nuclear spins~\cite{bochkin} and photonic lattices using coupled waveguides~\cite{perez2013,chapman}. 

Relaxing the constraint of perfect state transfer, protocols for pretty good transfer aim to identify optimal schemes for transmitting information with high fidelity across permanently coupled spin chains~\cite{godsil2012,godsil}. One approach is for example to encode the information as a Gaussian wave packet in multiple spins at the sender's end~\cite{osborne,hasel}. Moving away from ideal spin chains, quantum state transfer has also been studied over disordered chains, both with random couplings and as well as random external fields~\cite{perfect, ashhab, Chiara}.  

Here, we study the problem of pretty good state transfer from a quantum channel point of view. It is known~\cite{bose} that state transfer over an ideal $XXX$ chain (also called the Heisenberg chain) can be realized as the action of an amplitude damping channel~\cite{Nielsen} on the encoded state. Naturally, this leads to the question of whether  quantum error correction (QEC) can improve the fidelity of quantum state transfer. QEC-based protocols that achieve pretty good transfer have been developed for noisy $XX$~\cite{kay, kay2018perfect} and Heisenberg spin chains~\cite{allcock}. 

In our work we study the role of adaptive QEC in achieving pretty good transfer over a class of $1$-d spin systems which preserve the total spin. This includes both the $XX$ as well as the Heisenberg chains, and more generally, the $XXZ$ chain. We use an {\it approximate} QEC (AQEC) code, which has been shown to achieve the same level of fidelity as perfect QEC codes for certain noise channels while making use of fewer physical resources~\cite{Leung, Fletcher, HKN_PM2010,cafaro}. Our protocol involves the use of multiple identical spin chains in parallel, with the information encoded in an entangled state across the chains. This is in contrast to the protocols in~\cite{kay, kay2018perfect} which use perfect QEC codes and encode into multiple spins on a single chain. Using the worst-case fidelity between the states of the sender and receiver's spins as the figure of merit, we demonstrate that pretty good state transfer maybe achieved over a class of spin-preserving Hamiltonians using an approximate code and a channel-adapted recovery map.

Finally, we present explicit results for the fidelity of state transfer obtained using our QEC scheme, for ideal as well as disordered $XXX$ chains. The presence of disorder in a $1$-d spin chain is known to lead to the phenomenon of localization~\cite{anderson}. Here, we analyze the distribution of the transition amplitude for a disordered $XXX$ chain, with random coupling strengths which are drawn from a uniform distribution. We modify the QEC protocol suitably so as to ensure pretty good transfer when the disorder strength is small. As the disorder strength increases, our analysis points to a threshold beyond which QEC does not help in improving the fidelity of state transfer. 

The rest of the paper is organized as follows. We discuss the basic state transfer protocol over a general class of spin-preserving Hamiltonians and the underlying quantum channel description in Sec.~\ref{sec:prelim}. We discuss the adaptive QEC protocol and the resulting fidelity in Sec.~\ref{sec:ideal}. We present results specific to the ideal $XXX$ chain in Sec.~\ref{sec:Heisenberg} and discuss the disordered chain in Sec.~\ref{sec:disorder}. Finally, we summarize our conclusions in Sec.~\ref{sec:concl}.

\section{Preliminaries}\label{sec:prelim}

We consider a general $1$-d spin chain with nearest neighbour interactions described by the Hamiltonian,
\begin{eqnarray} \label{eq:H_gen}
\cH &=& -\sum_{k} J_{k}\left(\sigma^{k}_{x}\sigma^{k+1}_{x}+\sigma^{k}_{y}\sigma^{k+1}_{y}\right) - \sum_{k}\tilde{J}_{k}\sigma_{z}^{k}\sigma^{k+1}_{z} \nonumber \\
&&  + \sum_{k}B_{k}\sigma_{k}^{z},
 \end{eqnarray}
 where,  $\{J_{k}\}>0$ and $\{\tilde{J}_{k}\}>0$ are site-dependent exchange couplings of a ferromagnetic spin chain, $\{B_{k}\}$ denote the magnetic field strengths at each site, and, $(\sigma^{k}_{x},\sigma^{k}_{y},\sigma^{k}_{z})$ are the Pauli operators at the $k^{\rm th}$ site. The spin sites are numbered as $j = 1,2, \ldots ,N$. We assume that the sender's site is the $s^{\rm th}$ spin and receiver's site is the $r^{\rm th}$ spin. 


We denote the ground state of the spin as $|\textbf{0}\rangle = |000\ldots 0\rangle $. Since we are interested in transmitting a qubit worth of information along the chain, we will work within the subspace spanned by the set of single particle excited states $|\textbf{j}\rangle$, with $|\textbf{j}\rangle$ denoting the state with the $j^{\rm th}$ spin alone flipped to $|1\rangle$. The Hamiltonian in Eq.~\eqref{eq:H_gen} preserves the total number of excitations, that is, $\left[ \cH,\sum_{i=1}^{N} \sigma _{z}^{i} \right] = 0 $ and hence the resulting dynamics is restricted to the $(N+1)$-dimensional subspace spanned by the single particle excited states and the ground state. 

The sender encodes an arbitrary quantum state $|\psi_{\rm{in}}\rangle = a|0\rangle + b|1\rangle $ at the  $s^{ th}$ site, with the coefficients $a$ and $b$ parameterized using a pair of angles $(\theta,\phi)$ as $a$=$\cos(\frac{\theta}{2})$, $b$=$e^{-i \phi} \sin(\frac{\theta}{2})$. The initial state of the spin chain is thus given by,
\begin{equation}
|\Psi(0)\rangle = a |\textbf{0}\rangle + b|\textbf{s}\rangle,  
\end{equation}
where $|\textbf{s}\rangle$ is the state of the spin chain with only the $s^{ th}$ spin is flipped to $|1\rangle$ and all other spins set to $|0\rangle$. Under the action of the Hamiltonian $\cH$ described in Eq.~\eqref{eq:H_gen}, after time $t$, the spin chain evolves to the state (here, and in what follows, we set $\hbar = 1$),
\begin{eqnarray}
|\Psi(t)\rangle &=& e^{-i\cH t}|\Psi(0)\rangle,\nonumber \\
 &=& a |\textbf{0}\rangle + b \sum_{j=1}^{N}\langle\textbf{j}\vert e^{-i\cH t}\vert \textbf{s}\rangle|\textbf{j}\rangle. \nonumber
\end{eqnarray}
Following~\cite{bose}, the state of the receiver's spin at the $r^{ th}$ site after time $t$, denoted as $\rho_{\rm out}(t)$, is obtained by tracing out all the other spins from the state of the full spin chain $\rho(t)= \vert\Psi(t)\rangle\langle\Psi(t)\vert$:
\begin{eqnarray}
&& \rho_{\rm out}(t) = \tr_{1,2,\ldots,r-1,r+1,N-1}\left[\rho(t)\right] \nonumber \\
 &=& \left[ |a|^{2} +|b|^{2}\left(1-|f_{r,s}^N (t)|^{2}\right) \right] |0\rangle\langle 0| + ab^{*} (f_{s,r}^{N}(t))^{*}|0\rangle\langle 1| \nonumber\\
&+&  ba^{*}f_{r,s}^{N}(t)|1\rangle\langle 0| + |b|^{2}|f_{r,s}^N(t)|^{2}|1\rangle\langle 1|, \label{eq:rho_out}
\end{eqnarray}
where, 
\begin{equation}
 f_{r,s}^N(t) = \langle \textbf{r} |e^{(-i \cH t)}|\textbf{s} \rangle \label{eq:trans_amp0}
 \end{equation}
is the {\it transition amplitude}, which gives the probability amplitude for the excitation to transition from the $s^{\rm th}$ site to $r^{\rm th}$ site. The function $f_{r,s}^{N}(t)$ satisfies,
\begin{eqnarray}
\sum_{r=1}^{N}|f_{r,s}^{N}(t)|^{2} &=& 1, \, \forall \; s = 1,2,\ldots, N . \nonumber \\
\sum_{k=1}^{N}f^{N}_{r,k}(t)(f^{N}_{k,s}(t))^{*} &=& \delta_{rs} , \, \forall \; k = 1,2,\ldots, N . \label{eq:trans_amp}
\end{eqnarray}
where $\delta_{rs}$ is the delta function with $\delta_{rs} = 1 $ for $r = s$ and $\delta_{rs} = 0$ for $r\neq s$.

As shown in~\cite{bose}, we thus obtain the reduced state in  Eq.~\eqref{eq:rho_out} at receiver's end as the action of a quantum channel on the input state. Specifically, 
\begin{equation} 
\rho_{\rm{out}}(t) = \cE(\rho_{\rm in}) = \sum_{k}E_{k}\rho_{\rm in}E_{k}^{\dagger}, 
\end{equation}
where $E_{0}$ and $E_{1}$ are the Kraus operators that describe the action of the channel. It is easy to see that the operators $E_{0}, E_{1}$ have the following form when written in the $\{|0\rangle, |1\rangle\}$ basis.
\begin{equation}
E_{0}  = \left( \begin{array}{cc}
1 & 0 \\
0 & f_{r,s}^{N}(t)
\end{array} \right), \; E_{1} = \left( \begin{array}{cc}
0 & \sqrt{1-|f_{r,s}^{N}(t)|^{2}} \\
0 & 0
\end{array} \right). \label{eq:Kraus_ideal}
\end{equation}
The Kraus operators in Eq.\eqref{eq:Kraus_ideal} lead to a channel that has the same structure as the amplitude damping channel, but is more general since the parameter $f_{r,s}^{N}(t)$ characterizing the noise in the channel is complex. 

Recall that the standard amplitude damping channel is parameterized by a {\it real} noise parameter $p$ and is described by a  pair of Kraus operators, written in the $\{|0\rangle, |1\rangle\}$ basis as~\cite{Nielsen},
\begin{equation}\label{eq:amp_damp}
E^{\rm AD}_{0}  = \left( \begin{array}{cc}
1 & 0 \\
0 & \sqrt{1-p}
\end{array} \right), \; E^{\rm AD}_{1} = \left( \begin{array}{cc}
0 & \sqrt{p}\\
0 & 0
\end{array} \right).
\end{equation}
This is the quantum channel induced in the original state transfer protocol in~\cite{bose} where the Hamiltonian considered is a Heisenberg chain in the presence of an external field of the form $\vec{B}$ =$B \hat{z}$, that is,
\begin{equation}
\tilde{\cH} =  - \frac{J}{2}\sum_{\langle i,j \rangle}\vec{\sigma}^{i}\cdot\vec{\sigma}^{j} - B\sum_{i}\sigma_{z}.  \label{eq:Heisenberg_B}
\end{equation}
By choosing the intensity of the $\vec{B}$-field appropriately, it is possible to adjust the phase of the complex amplitude $f_{r,s}^{N}(t)$ to be a multiple of $2\pi$ and hence replace $f_{r,s}^{N}(t)$ by $\vert f_{r,s}^{N}(t)\vert$, thus obtaining the amplitude damping channel described in Eq.~\eqref{eq:amp_damp} above. 

While much of the past work on state transfer has focused on the Heisenberg Hamiltonian in Eq.~\eqref{eq:Heisenberg_B}, here, we will focus on the more general Hamiltonian in Eq.~\eqref{eq:H_gen}. We study the problem of transmitting an arbitrary quantum state from the $s^{\rm th}$ site to the $r^{\rm th}$ site of an $N$-spin chain. We quantify the performance of the protocol in terms of the fidelity between the final state $\rho_{\rm out} \equiv \cE(|\psi_{\rm in}\rangle\langle\psi_{\rm in}|)$ and the input state $|\psi_{\rm in}\rangle$. Specifically, we use the {\it worst-case} fidelity, which is defined as~\cite{Nielsen},
\begin{equation}
 F^{2}_{\rm min} (\cE )  = \min_{a,b} \, \langle \psi_{\rm in} \vert \rho_{\rm out} \vert \psi_{\rm in} \rangle, \nonumber
\end{equation}
where the minimization is over all possible input states $a|0\rangle + b |1\rangle$. We say that pretty good state transfer is achieved when the worst-case fidelity $F^{2}_{\rm min}(\cE) \geq 1 -\epsilon$, for some $\epsilon > 0$. 

Let $|f^{N}_{r,s}(t)|$ and $\Theta$ refer to the amplitude and phase respectively, of the noise parameter $f^{N}_{r,s}(t) = e^{i\Theta}|f^{N}_{r,s}(t)|$ of the general quantum channel in Eq.~\eqref{eq:Kraus_ideal}. For such a channel, the worst-case fidelity depends on both the amplitude $|f^{N}_{r,s}(t)|$ as well as the phase $\Theta$. However, following the original protocol in~\cite{bose}, if we choose the magnetic fields $\{B_{k}\}$ so as to ensure that $\Theta$ is a multiple of $2\pi$, we can show that,
\begin{equation}
F^{2}_{\rm min} (\cE ) = |f_{r,s}^{N}(t)|^{2}. \label{eq:fmin_noQEC}
\end{equation}
In what follows, we examine how the worst-case fidelity may be improved using techniques from quantum error correction. In particular, by obtaining a functional relationship between the worst-case fidelity and the transition amplitude using an adaptive QEC procedure, we show how the fidelity can be improved by an order in the noise parameter.


\section{State Transfer protocol based on adaptive QEC}\label{sec:ideal}

Given a specific form of the spin-conserving Hamiltonian in Eq.~\eqref{eq:H_gen}, it is possible to estimate $|f^{N}_{r,s}(t)|$ and $\Theta$ for a specific choice of sites $s,r$ and $t$ by making repeated measurements on the spin chain~\cite{perfect}. Knowing $\Theta$, we may apply a phase gate of the form,
\begin{equation}
 U_{\Theta} = \left( \begin{array}{cc}
1 & 0 \\
0 & e^{-i\Theta} 
\end{array} \right),  \label{eq:theta-gate}
\end{equation}
to change the encoding basis to $\{|0\rangle, e^{-i\Theta}|1\rangle\}$. In this rotated basis, the channel in Eq.~\eqref{eq:Kraus_ideal} is identical to the amplitude damping channel described in Eq.~\eqref{eq:amp_damp}. At the level of the Hamiltonian, this is the same as choosing the field strengths $\{B_{k}\}$ so as to make the phase $\Theta$ trivial. Indeed, by making an appropriate choice of magnetic fields, it is always possible to transfom the spin-preserving Hamiltonian in Eq.~\eqref{eq:H_gen} into an $XXX$ interaction as in Eq.~\eqref{eq:Heisenberg_B}  (see~\cite{kayreview}) and hence map the underlying noise channel to an amplitude-damping channel. 

One na\"ive approach to improving the fidelity of state transfer is to therefore first apply the $U_\Theta$-gate and then use any of the well known QEC protocols which correct for amplitude damping noise~\cite{Leung, HKN_PM2010, Fletcher, AD_reliable2017}.  However, such an approach fails in the presence of disorder. When we consider a disordered $1$-d spin chain wherein either the  couplings $\{J_{k}, \tilde{J}_{k}\}$ or the fields $\{B_{k}\}$  in Eq.~\eqref{eq:H_gen} maybe random, the underlying noise channel is stochastic. The two real parameters $|f^{N}_{r,s}(t)|$ and $\Theta$ charcterizing the noise in the channel vary with each disorder realization, and hence an encoding procedure that relies on knowledge of a specific realization of $\Theta$ is not useful. Moreover, implementing a phase gate as in Eq.~\eqref{eq:theta-gate}based on the disorder-averaged value of $\Theta$ does not help -- such a phase gate will no longer cancel out the arbitrary (random) phase in Eq.~\eqref{eq:Kraus_ideal} and we do not obtain an amplitude damping channel in the rotated basis after the action of the phase gate.  

We would therefore like to tackle the problem of correcting for the more general noise channel in Eq.~\eqref{eq:Kraus_ideal}. Taking inspiration from the structural similarity to the amplitude damping channel, we propose a QEC protocol using an {\it approximate} $4$-qubit code~\cite{Leung} along with the channel-adapted near-optimal recovery proposed in~\cite{HKN_PM2010}. Specifically, we use a $4$-qubit code $\cC$, realized as the span of the following pair of orthogonal states,
\begin{eqnarray}\label{eq:4qubit}
|0_{L}\rangle &=& \frac{1}{\sqrt{2}}\left( \, |0000\rangle + |1111\rangle \, \right),\nonumber \\ 
|1_{L}\rangle &=& \frac{1}{\sqrt{2}}\left( \, |1100\rangle + |0011\rangle \, \right) .
\end{eqnarray}
This code was shown to be {\it approximately} correctable for amplitude damping noise, both in terms of worst case fidelity~\cite{Leung} as well as entanglement fidelity~\cite{cafaro2014simple}. The code is approximate in the sense it does not satisfy the conditions for perfect quantum error correction~\cite{Nielsen}, for any single-qubit error. 

The recovery map we use is adapted to the given noise map $\cE$ and code $\cC$, and can be described in terms of the Kraus operators of the noise and the projector $P$ onto the codespace, as follows,
\begin{equation}\label{eq:Petz}
\cR(.) =\sum_{i}P E_{i}^{\dagger}\cE(P)^{-1/2}(.)\cE(P)^{-1/2}E_{i}P ,
\end{equation}
where the inverse of $\cE({P})$ is taken on its support. Such a recovery map $\cR$ has been shown to achieve worst-case fidelity close to that of the optimal recovery map for any given noise channel $\cE$~\cite{HKN_PM2010}. In the specific case of the amplitude-damping channel and the $4$-qubit code, the adaptive recovery map defined above was shown to achieve better worst-case fidelity than the recovery used in~\cite{Leung}.

\begin{figure}[H]
\centering
\includegraphics[width=0.5\textwidth, height=.2\textwidth]{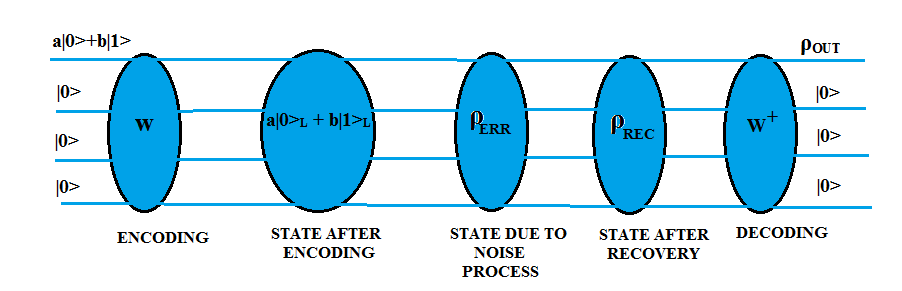}
\caption{$4$-qubit QEC on spin chains}
\label{fig:QEC_schematic}
\end{figure}

The quantum state transfer protocol with QEC is implemented using a set of $4$ unmodulated, identical, spin chains. Fig.~\ref{fig:QEC_schematic} depicts a schematic of our protocol. The initial, encoded state $|\psi_{\rm enc}\rangle$ is now an entangled state across the four chains, involving only a single spin (the $s^{\rm th}$ site)in each of the chains.
\begin{equation}\label{eq:psi_enc}
\vert\psi_{\rm enc}\rangle = a|0\rangle_{L} +b |1 \rangle_{L} .
\end{equation}
Once the initial state is prepared, the four chains are allowed to evolve in an uncoupled fashion, according to the Hamiltonian in Eq.~\eqref{eq:H_gen}. After time $t$, the state at the receiver's site is a joint state of the $r^{\rm th}$ site of the four chains, and is described by action of the map $\cE^{\otimes 4}$ with the time-dependent noise parameter $f_{r,s}^{N}(t)$. Thus,
\[ \rho_{\rm err} = \small{\cE^{\otimes 4}(\rho_{\rm enc})} = \sum_{i} E^{(4)}_{i} \rho_{\rm enc} \left( E^{(4)}_{i}\right)^{\dagger}, \]
where $E^{(4)}_{i}$ are the Kraus operators of the $4$-qubit noise channel realized as four-fold tensor products of the operators $E_{0}$ and $E_{1}$ in Eq.~\eqref{eq:Kraus_ideal}. After evolving the chains for time $t$, the recovery map $\cR^{(4)}$ is applied at the receiver's site of the four spin chains. The final state at the receiver's end after the QEC protocol is obtained as,
\begin{equation} 
\rho_{\rm rec} = \sum_{i,j} R^{(4)}_{j} E^{(4)}_{i} \rho_{\rm{enc}} \left(E^{(4)}_{i}\right)^{\dagger} \left(R^{(4)}_{j}\right)^{\dagger}, \nonumber
\end{equation}
with the Kraus operators $R^{(4)}_{i}$ given by, 
\begin{equation}
R^{(4)}_{i} = P\small{\left(E^{(4)}_{i}\right)^{\dagger}}\cE^{\otimes 4}(P)^{-1/2}, \label{eq:4qubit_petz}
\end{equation}
where $P\equiv |0_{L}\rangle\langle 0_{L}| + |1_{L}\rangle\langle 1_{L}|$ is the projector onto the $4$-qubit space described in Eq.~\eqref{eq:4qubit}. The fidelity of the $4$-chain quantum state transfer protocol is then given by,
\[F^{2}_{\rm min} \left( \cR^{(4)}\circ\cE^{\otimes 4}, \cC \right) \equiv \min_{a,b}\langle \psi_{\rm enc}\vert \rho_{\rm rec}\vert\psi_{\rm enc}\rangle.,\]
where the minimization is over all states in the codespace $\cC$. As before, pretty good transfer is achieved when the worst-case fidelity is high, that is, $F^{2}_{\rm min}\left( \cR^{(4)}\circ\cE^{\otimes 4}, \cC \right) \geq 1 - \epsilon$, for $\epsilon > 0$. We now present a key result of the paper, namely a bound on the fidelity of state transfer using the adaptive QEC protocol, in terms of the transition amplitude $f^{N}_{r,s,}(t)$.

\begin{theorem}\label{thm:aqec_fid}
The fidelity of quantum state transfer from site $s$ to site $r$ under a spin-conserving Hamiltonian as in Eq.~\eqref{eq:H_gen}, using the $4$-qubit code $\cC$ and adaptive recovery $\cR^{(4)}$ at time $t$, is given by,
\begin{equation}
F_{\rm min}^{2} \left( \cR^{(4)}\circ\cE^{\otimes 4}, \cC \right) \approx 1- \frac{7p^{2}}{4}+ O(p^{3}), \label{eq:4qubit_fid}
\end{equation}
where $p= 1-|f_{r,s}^{N}(t)|^{2}$. 
\end{theorem}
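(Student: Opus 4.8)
The plan is to reduce the general channel to standard amplitude damping and then compute the worst-case fidelity of the composed map $\cR^{(4)}\circ\cE^{\otimes 4}$ on the code $\cC$ directly, perturbatively in $p$. First I would apply the local phase gate $U_\Theta^{\otimes 4}$ of Eq.~\eqref{eq:theta-gate} at the four receiver sites, exactly as in the single-chain reduction leading to Eq.~\eqref{eq:fmin_noQEC}. This maps each factor of $\cE$ in Eq.~\eqref{eq:Kraus_ideal} to the amplitude-damping channel of Eq.~\eqref{eq:amp_damp} with $p=1-|f_{r,s}^{N}(t)|^{2}$, so that $\cE^{\otimes 4}$ becomes $(\cE^{\rm AD})^{\otimes 4}$. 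Because this is a local unitary under which the codewords \eqref{eq:4qubit}, the projector $P$, and the Petz recovery \eqref{eq:Petz} all transform covariantly, and since the worst-case fidelity is basis independent, it suffices to establish Eq.~\eqref{eq:4qubit_fid} for the amplitude-damping channel with parameter $p$.

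Next I would fix the error structure. Writing $A_{0}=\mathrm{diag}(1,\sqrt{1-p})$ and $A_{1}=\sqrt{p}\,|0\rangle\langle1|$, the sixteen Kraus operators $E^{(4)}_{\vec i}=A_{i_{1}}\otimes\cdots\otimes A_{i_{4}}$ are naturally graded by their weight $w(\vec i)=\sum_k i_k$, each weight-$w$ operator carrying an overall factor $p^{w/2}$. The crucial property of the code \eqref{eq:4qubit}, which I would establish by acting directly on $|0_L\rangle$ and $|1_L\rangle$, is that the four weight-one operators send the two codewords to mutually orthogonal states that are also orthogonal to $\cC$, with matched leading amplitudes $\sqrt{p/2}$. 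This is the approximate-correctability condition that makes single damping errors correctable to first order and forces the leading infidelity down to $O(p^{2})$.

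With this in hand I would construct the recovery explicitly. Computing $\cE^{\otimes4}(P)=\sum_{\vec i}E^{(4)}_{\vec i}P\,(E^{(4)}_{\vec i})^{\dagger}$ and using the orthogonality of the weight-one error subspaces to bring it to block form, I would evaluate its inverse square root on its support to the required order in $p$, assemble the recovery Kraus operators $R^{(4)}_{\vec i}$ of Eq.~\eqref{eq:4qubit_petz}, and project onto $\cC$ to obtain the effective $2\times2$ matrices $M_{\vec i,\vec j}=P\,R^{(4)}_{\vec j}E^{(4)}_{\vec i}\,P$ of the corrected channel in the logical basis $\{|0_L\rangle,|1_L\rangle\}$, expanded through $O(p^{2})$. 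The syndrome-matching built into the Petz map ensures that only a few pairs $(\vec i,\vec j)$ contribute at this order. The worst-case fidelity then follows from $\langle\psi_{\rm enc}|\rho_{\rm rec}|\psi_{\rm enc}\rangle=\sum_{\vec i,\vec j}|\langle\psi_{\rm enc}|M_{\vec i,\vec j}|\psi_{\rm enc}\rangle|^{2}$ for $|\psi_{\rm enc}\rangle=a|0\rangle_L+b|1\rangle_L$ as in Eq.~\eqref{eq:psi_enc}, which I would expand to second order and minimize over $(a,b)$; the minimizing logical input fixes the coefficient $7/4$.

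The main obstacle is the consistent bookkeeping of the $O(p^{2})$ terms. The inverse square root $\cE^{\otimes4}(P)^{-1/2}$ must be expanded as a matrix function, and the coefficient $7/4$ is sensitive to three separate second-order contributions: the $p^{2}$ piece of the no-error branch $A_{0}^{\otimes4}$ (through $\sqrt{1-p}=1-p/2-p^{2}/8+\cdots$), the genuinely uncorrectable weight-two errors, and the normalization correction introduced by $\cE^{\otimes4}(P)^{-1/2}$. Dropping or double-counting any one of these shifts the coefficient, and because the minimization over code inputs must be carried out only after all of them are assembled, the order of operations --- expand, correct, then minimize --- is what makes the calculation delicate rather than merely long.
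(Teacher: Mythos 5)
Your proposal is correct in substance and reaches the same computational core as the paper, but it arrives at the key lemma by a genuinely different route. The paper establishes the $\Theta$-independence of the composite map by brute force: it writes $\cE^{\otimes 4}(P)^{-1/2}$ and all sixteen Kraus operators $E^{(4)}_{i}$ in the computational basis and verifies entry by entry (Appendix A) that the phases $e^{\pm i k\Theta}$ conjugate away in every product $P(E^{(4)}_{j})^{\dagger}\cE^{\otimes 4}(P)^{-1/2}E^{(4)}_{i}P$. You instead exploit the structure of the adapted recovery: since $\cE = \cU_{\Theta}^{\dagger}\circ\cE^{\rm AD}$ with $\cU_{\Theta}(\cdot)=U_{\Theta}(\cdot)U_{\Theta}^{\dagger}$, one has $E_{i}=U_{\Theta}^{\dagger}E_{i}^{\rm AD}$ and $\cE^{\otimes4}(P)^{-1/2}=V^{\dagger}\,(\cE^{\rm AD})^{\otimes4}(P)^{-1/2}\,V$ with $V=U_{\Theta}^{\otimes4}$, whence $\cR_{\cE,P}=\cR_{\cE^{\rm AD},P}\circ\cV$ and the composites $\cR\circ\cE^{\otimes4}$ and $\cR_{\rm AD}\circ(\cE^{\rm AD})^{\otimes4}$ are \emph{equal as maps}. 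This is cleaner and makes transparent why the adaptive recovery, unlike a fixed-basis encoding, is insensitive to $\Theta$ --- which is exactly the property the paper later relies on for the disordered chain. The remainder of your plan (weight grading of the errors, approximate correctability of the weight-one sector with leading amplitude $\sqrt{p/2}$, perturbative inversion of $\cE^{\otimes4}(P)^{-1/2}$, assembly of $M_{\vec i,\vec j}$, and minimization over $a,b$) is the same calculation the paper performs to obtain Eq.~\eqref{eq:4qubitmin} and the minimizer at $|a|^{2}=|b|^{2}=1/2$, $\phi=\pi/2$, giving the coefficient $0+1/2+5/4=7/4$.

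One point needs tightening: your justification of the reduction via ``the codewords and $P$ transform covariantly'' and ``the worst-case fidelity is basis independent'' is not quite right as stated, because $U_{\Theta}^{\otimes4}$ does \emph{not} preserve the code space ($|0_{L}\rangle\mapsto(|0000\rangle+e^{-4i\Theta}|1111\rangle)/\sqrt{2}$), so a unitary equivalence of channels alone would force you to rotate $\cC$ and would not give the stated $F^{2}_{\rm min}(\,\cdot\,,\cC)$. What saves the argument is the exact operator identity $\cR_{\cE,P}\circ\cE^{\otimes4}=\cR_{\cE^{\rm AD},P}\circ(\cE^{\rm AD})^{\otimes4}$ above, under which no change of code is needed. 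With that substitution your proof is complete and, if anything, more illuminating than the paper's.
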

\begin{proof}
We first rewrite the Kraus operators given in Eq.~\eqref{eq:Kraus_ideal}, as, 
\begin{eqnarray}
 E_{0} &=& |0 \rangle \langle 0| +|f_{r,s}^{N}(t)|e^{i \Theta} |1\rangle \langle 1| \nonumber \\ \nonumber
 E_{1} &=& |0\rangle \langle 1| \sqrt{1-|f_{r,s}^{N}(t)|^{2}} ,
\end{eqnarray}
where, $|f_{r,s}^{N}(t)|$ and $\Theta$ are the absolute value and phase of the complex-valued transition amplitude $f_{r,s}^{N}(t)$. The state after the $4$-qubit recovery map is then given by,
\[ \rho _{\rm rec} = \left(\small{\cR^{(4)}\circ\cE^{\otimes 4}}\right) \left(\rho_{\rm enc}\right).\]
The composite map $\left(\cR^{(4)}\circ\cE^{\otimes 4}\right)$ comprising noise and recovery has Kraus operators of the form,
\begin{equation}
 \small{P\left(E^{(4)}_{j}\right)^{\dagger}}\cE^{\otimes 4}(P)^{-1/2}E^{(4)}_{i}P. \label{eq:kraus_composite}
 \end{equation}
The key step in obtaining the desired fidelity is to show that the Kraus operators of the composite map written above are independent of $\Theta$. First, we write out $\small{\cE^{\otimes 4}(P)^{-1/2}}$ in the (standard) computational basis of the $4$-qubit space. 
\begin{eqnarray}
&& \small{\cE^{\otimes 4}(P)^{-1/2}} = \sum_{i=1}^{16}\cG_{i}|i\rangle\langle i| + e^{-4 i \Theta } \cG_{17}|0000\rangle\langle 1111| \nonumber \\
 &+& e^{i 4\Theta} \cG_{17}|1111\rangle\langle0000| + \cG_{18}(|1100\rangle \langle0011| +|0011\rangle \langle 1100|), \nonumber
\end{eqnarray}
where $\{\cG_{i}\}$ are polynomial functions of the transition amplitude $|f_{r,s}^{N}(t)|$. The $\Theta$-dependence in this pseudo-inverse operator occurs only in the span of $\{|0000\rangle, |1111\rangle\}$. Since $\small{\cE^{\otimes 4}(P)^{-1/2}}$ is sandwiched between the Kraus operators of the $4$-qubit channel and their adjoints, we also write down the Kraus operators $\{E^{(4)}_{i}\}$ in the computational basis. Then, an explicit computation reveals that the $\Theta$-dependence gets conjugated out for each of the Kraus operators in Eq.~\eqref{eq:kraus_composite}.  We refer to Appendix~\ref{sec:E(P)} for the details of this calculation.


\noindent Hence the final state after noise and recovery $\rho_{\rm rec}$ can be expressed as a linear sum of terms that are independent of $\Theta$. Since the parameter $\Theta$ is effectively suppressed, the fidelity after using $4$-qubit code and the universal recovery in Eq.~\eqref{eq:Kraus_ideal}, is purely a function of $p = 1-|f_{r,s}^{N}(t)|^{2}$.  

\noindent The fidelity corresponding to the initial state $|\psi_{\rm enc}\rangle = a|0_{L}\rangle + b|1_{L}\rangle$ can thus be obtained as,
\begin{eqnarray} \label{eq:4qubitmin}
&& F^{2} ( \cR^{( 4)}\circ\cE^{\otimes 4},\cC)  \nonumber \\
&=& 1 - p^{2}\left((|a|^{2} - |b|^{2})^{2} - ((ba^{*})^{2}+(ab^{*})^2)+ 5 |a|^{2}|b|^{2}\right) \nonumber \\
&& + \, O(p^{3}) ,
\end{eqnarray}
where $O(p^{3})$ refers to terms of order $p^{3}$ and higher. Parameterizing $a$ and $b$ as $a=\cos{\frac{\theta}{2}}$, $b=e^{-i \phi}\sin{\frac{\theta}{2}}$, the fidelity attains its minimum value at $\{\theta,\phi\}= \{\frac{(2n+1)\pi}{2},\frac{(2n+1)\pi}{2}\}$ ($n=1,2,\ldots$), so that the worst-case fidelity over the $4$-qubit code $\cC$ is given by,
\[
F_{\rm min}^{2} \left( \cR^{(4)}\circ\cE^{\otimes 4}, \cC \right) \approx 1- \frac{7p^{2}}{4}+ O(p^{3}).
\] 
\end{proof}

Our result shows that using the adaptive recovery in conjunction with the approximate code leads to a fidelity that is independent of the phase $\Theta$ of the complex noise parameter $f^{N}_{r,s}(t)$. Thus, to optimize the fidelity of state transfer between the $s^{\rm th}$ and $r^{\rm th}$ site of a chain of $N$ spins evolving according to the Hamiltonian in Eq.~\eqref{eq:H_gen}, we simply need to find the time $t$ at which $\vert f^{N}_{r,s}(t)\vert^{2}$ is maximized. Recall that the worst-case fidelity without QEC (using the single chain protocol) is linear in the parameter $p$, as observed in  Eq.~\eqref{eq:fmin_noQEC}. Thus we see an $O(p)$ improvement in fidelity with QEC, as expected. 

Furthermore, our estimate of the worst-case fidelity implies that so long as the noise strength $p$ is such that $1 - (7/4)p^{2} > 1- p$, the adaptive QEC protocol achieves better fidelity than the single chain protocol without QEC. This constraints the noise strength $p$ to satisfy $0<p< (4/7) \approx 0.57$. This in turn implies a threshold for the transition amplitude, namely, $|f_{r,s}^{N}(t)|^{2} > 0.43$, below which our adaptive QEC protocol will not offer any improvement in the fidelity of state transfer. 


\section{Results for the $1$-d Heisenberg Chain} \label{sec:Heisenberg}

As a simple example to illustrate the performance of the adaptive QEC protocol, we now consider a special case of the Hamiltonian in Eq.~\eqref{eq:H_gen}, namely, an $N$-length, ideal Heisenberg chain, with $J_{k} = \tilde{J}_{k} =  J/2 (J>0)$ and $B_{k} =0$, for all $k$. This is also often referred to as the $XXX$-chain in the literature. Setting $J=1$ without loss of generality, we present numerical results on the fidelity of state transfer from the first ($s=1$) to the $N^{\rm th}$ ($r=N$) site.
\begin{figure}[H]
\centering
\includegraphics[width=0.5\textwidth]{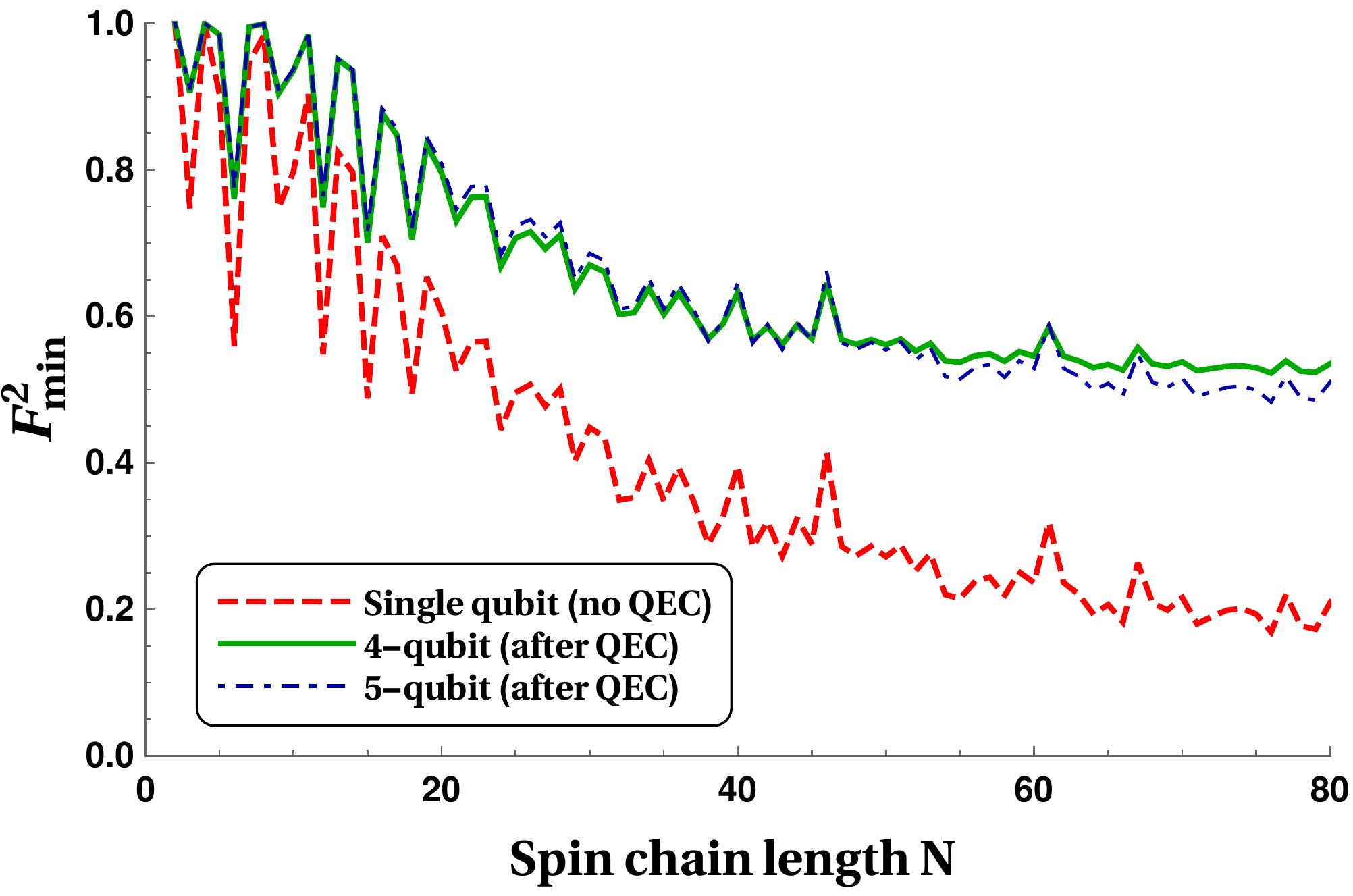}
\caption{Worst-case fidelity as a function of chain length $N$.}
\label{fig:fmin_QEC}
\end{figure}
Fig.~\ref{fig:fmin_QEC} compares the performance of state transfer protocols with and without QEC. In particular, it compares the performance of our $4$-chain state transfer protocol with the single-chain (no QEC) protocol~\cite{bose} and the $5$-chain protocol proposed in~\cite{allcock}. For each $N$, we plot the fidelity of state transfer from the $1^{\rm st}$ site to the $N^{\rm th}$ site on a $N$-length spin chain, after a time $t^{*}$ chosen such that $|f_{N,1}^{N}(t)|$ is maximum at $t=t^{*}$, for $0<t<4000/J$.

From the plot we see that the QEC-based protocols achieve pretty good state transfer over longer distances than the single chain protocol. Furthermore, using {\it approximate} QEC it is possible to achieve as high as fidelity as with the standard $5$-qubit code, using fewer spin chains. Specifically, in the regime of small noise parameter $p$, we can show that the worst-case fidelity obtained using the $5$-qubit code is,
\begin{equation}
F^{2}_{\min} \approx 1-\frac{15p^{2}}{8}+ O(p^{3}) . \label{eq:5qubit_fid}
\end{equation}
Correspondingly, a $5$-chain protocol performs well over single chain protocol when $0< p < (8/15) \approx 0.53$, implying that the transition amplitude should satisfy $|f_{r,s}^{N}(t)|^{2} > 0.47$, which is a higher threshold than that required by our adaptive QEC protocol.

For the ideal Heisenberg chain, it was recently shown that~\cite{godsil}, there always exists a time $t$ at which $|f_{1,N}^{N}(t)|^{2} > 1 - \epsilon$ if and only if the length of the chain is a power of $2$, that is, $N = 2^{m}$. In other words, pretty good state transfer is always possible between the ends of a Heisenberg spin chain whose length $N$ is of the form $N = 2^{m} (m>1)$.  We may therefore consider improving the performance of our QEC-based protocol by repeating the error correction procedure every $2^{m}$ sites. Specifically, we can achieve pretty good state transfer over a chain of arbitrary length $L$, by stitching together smaller chains whose lengths are of the form $N =2^{m}$. At every stage of the repeated QEC protocol, there are exactly $2^{m}$ interacting spins and the rest of the spin-spin-interactions are turned off. 
 \begin{figure}[H] 
 \centering
 \includegraphics[width=0.47\textwidth]{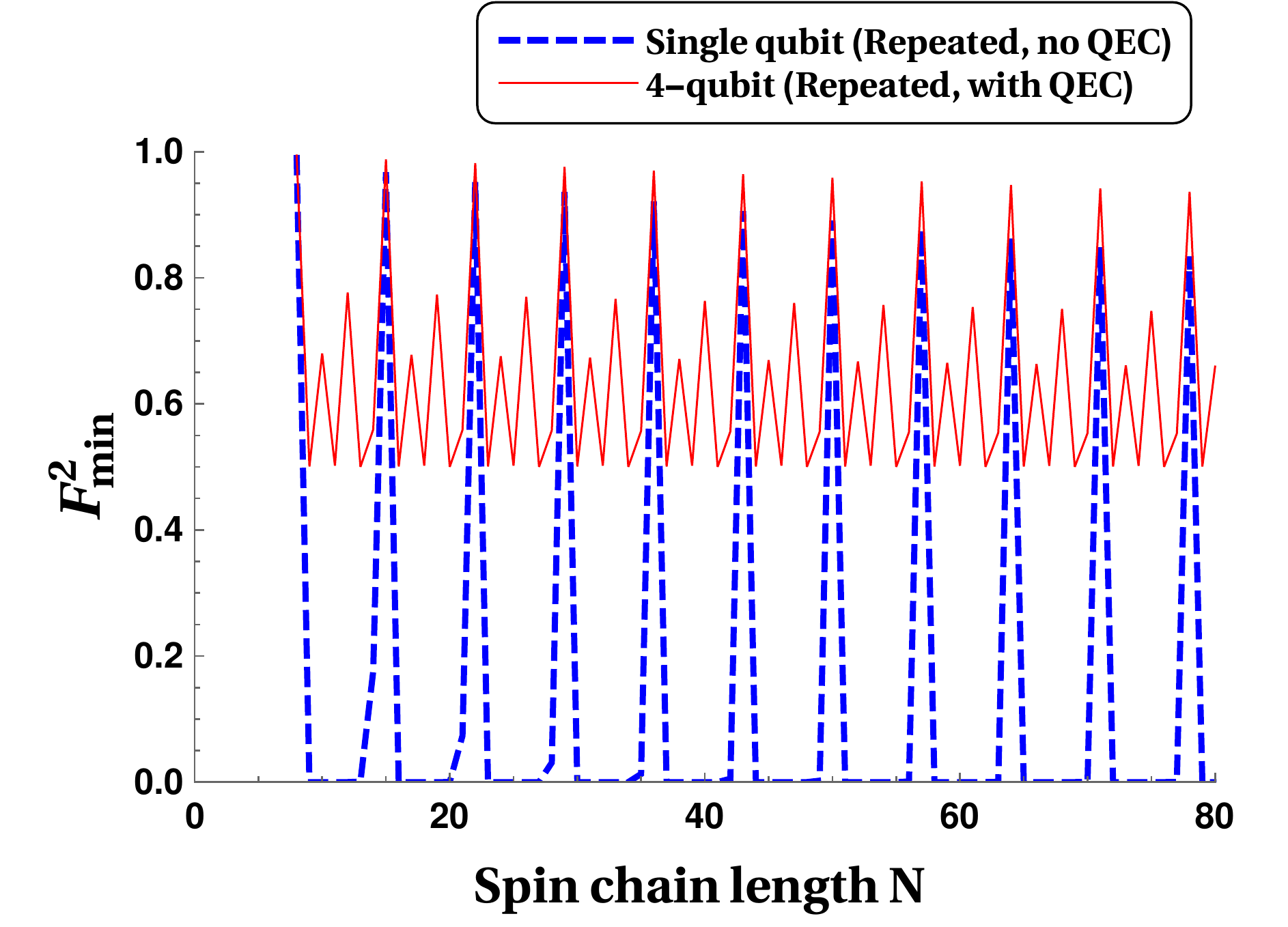}
 \caption{Worst-case fidelity using repeated QEC}
 \label{fig:repeated_QEC}
 \end{figure}
Fig.~\ref{fig:repeated_QEC} shows an example of the resulting improvement in fidelity when the QEC protocol is repeated every $8$ sites. For comparison, we plot the worst-case fidelity obtained by stitching together a sequence of length-$8$ chains, without QEC. The repeated QEC protocol proceeds as follows. We first implement our QEC protocol for an $8$-spin chain, evolving for time $t^{*}$ at which $|f_{8,1}^8 (t)|$ maximizes. We repeat this procedure some $k$ times, where $k$ is the largest integer such that $8k < N$ and finally perform QEC for the remaining $N-7k$ sites for the same waiting time $t^{*}$. Such a repeated QEC protocol indeed enables pretty good transfer for much  longer lengths, as seen in the plot.  

More generally, if $F^{2}_{\rm min} \approx 1 - \alpha p^{2}$ is the fidelity of the single-shot QEC protocol, repeating the procedure $k$ times gives us a fidelity of $F^{2}_{\min} = 1-(p_{\rm{new}})$ , with,
\begin{eqnarray}
p_{\rm{new}}= (1-(1-\alpha p^{2})^{k}) \nonumber
\end{eqnarray}
where $p_{\rm{new}}$ is the noise parameter obtained after repeating QEC $k$ times.

\section{Quantum state transfer on a disordered Heisenberg chain}\label{sec:disorder}

Moving away from an ideal spin chain with a fixed, uniform coupling between successive spins, we now study state transfer over a disordered $XXX$ chain, where the spin-spin couplings are randomly drawn from some distribution. It is well known that the presence of disorder in a $1$-d spin chain leads to the phenomenon of localization~\cite{anderson} of information close to one end of the chain. It is therefore a challenging task to identify protocols which achieve perfect or pretty good transfer over disordered spin chains, overcoming the effects of localization. 

Past work on disordered chains has primarily focused on the $XX$ chain. Starting with a modulated chain that admits perfect state transfer, both random magnetic field and random couplings have been studied~\cite{Chiara}. Alternately, an unmodulated chain with random couplings at all except the sender and receiver sites has also been studied~\cite{ashhab}. 

When viewed in the quantum channel picture, the presence of disorder becomes as an additional source of noise. The role of QEC in overcoming the effects of disorder has been studied both for the $XX$~\cite{kay} as well as the Heisenberg chains~\cite{allcock}. The QEC protocol for a noisy $XX$ chain with random couplings involves encoding into multiple spins on a single chain using modified CSS codes~\cite{kay}. The QEC protocol in~\cite{allcock} encodes into multiple identical, uncoupled chains using the standard $5$-qubit code, while also requiring access to multiple spins at the sender and receiver ends of each of the chains. Furthermore, the protocol based on the $5$-qubit code involves choosing an encoding based on the phase $\Theta$ of the transition amplitude (as explained in Sec.~\ref{sec:ideal}), which in turn is specific to the disorder realization. This makes the QEC procedure hard to implement in a practical sense. 

Here, we show how the channel-adapted QEC procedure described in Sec.~\ref{sec:ideal} can be used to achieve pretty good state transfer over an $XXX$ chain with random couplings. As before, we quantify the performance of the state transfer protocol in terms of the fidelity between the initial and final states. When the underlying quantum channel is stochastic, as in the case of a disordered chain, we use the {\it disorder-averaged} worst-case fidelity $\langle F^{2}_{\rm min}\rangle_{\delta}$, to characterize the performance of the state transfer protocol. We say that pretty good state transfer is achieved by a certain choice of code $\cC$ and recovery $\cR$ when the corresponding disorder-averaged fidelity $\langle F^{2}_{\rm min}\rangle_{\delta} \geq 1 -\epsilon$, for some $\epsilon > 0$. 

We consider a disordered Heisenberg chain with couplings $J_{k}$= $\frac{\overline{J}}{2}(1+\Delta_{k})$, where $\Delta_{k}$ are independent, identically distributed random variables drawn from a uniform distribution between $ \left[ -\delta,\delta \right ] $ and $\overline{J}$ is the mean value of the coupling strength, which we may set to $1$, without loss of generality.  Note that such a Hamiltonian conserves the total spin and hence falls within the universality class discussed in Sec.~\ref{sec:prelim}. 

Consider a state transfer protocol, where the sender wishes to transmit the state $|\psi_{\rm in}\rangle = a|0\rangle + b|1\rangle$ from the $s^{\rm th}$ site to the $r^{th}$ site via the natural dynamics of the chain. As before, the final state at the receiver's site, tracing out the other spins can be realized as the action of a quantum channel $\cE$,
\[ \rho_{\rm out} = \cE(\rho_{\rm in}) = \sum_{k}E_{k}\rho_{\rm in} E_{k}^{\dagger}, \]
with the same Kraus operators $\{E_{0}, E_{1}\}$ as in Eq.~\eqref{eq:Kraus_ideal}. The key difference however is in the nature of the noise parameter $p \equiv 1 - |f_{r,s}^{N}(t,\{\Delta_{k} \})|^{2}$: in the case of the {\it disordered} chain, the transition amplitude $f_{r,s}^{N}(t, \{\Delta_{k}\} ) $ between site $s$ and $r$ for a chain of length $N$ allowed to evolve for a time $t$, is a random variable whose value depends on the specific realization of the disorder variables $\{\Delta_{k}\}$. The distribution of $f_{r,s}^{N}(t, \{\Delta_{k}\} )$ for given set of $r,s,N,t$ values depends on the distribution over which the disorder variables $\{\Delta_{k}\}$ are sampled. To illustrate our point, we specifically consider below the case where the coupling strengths $\{\Delta_{k}\}$ are independently sampled from a uniform distribution.

\subsection{Transition amplitude in the presence of disorder}\label{sec:transAmp_disorder}

The Heisenberg Hamiltonian $\mathcal{H}$ with static disorder in the coupling strengths, has the form,
\begin{equation}\label{eq:H_dis}
\mathcal{H}_{\rm dis} = -\sum_{k}\frac{\overline{J}(1+\Delta_{k})}{2}(\sigma^{k}_{x}\sigma^{k+1}_{x}+\sigma^{k}_{y}\sigma^{k+1}_{y}+ \sigma_{z}^{k}\sigma^{k+1}_{z}).
\end{equation}
Here, the effect of disorder is introduced via the i.i.d. random variables $\{\Delta_{i}\}$ which take values over a uniform distribution between $\left [-\delta,\delta \right ]$. The quantity $\delta$ is called the disorder strength, and  $\overline{J}$ is the mean value of the coupling coefficient. We may view the disordered Hamiltonian as a sum of the form $\mathcal{H}_{\rm dis} =\mathcal{H}_{o}+ \mathcal{H}_{\delta}$, where $\mathcal{H}_{o}$ denotes the ideal $XXX$ Hamiltonian studied in the previous section and $\mathcal{H}_{\delta}$ is given by,
\[ \mathcal{H}_{\delta} = -\frac{\overline{J}}{2} \sum_{k}\Delta_{k} \overrightarrow{\sigma^{k}} \cdot \overrightarrow{\sigma^{k+1}}. \]
$\mathcal{H}_{\delta}$ captures the effect of disorder in the spin chain and can be treated as a perturbation of the Hamiltonian $\cH_{0}$. Since $[\cH_{0},\cH_{\delta}] \neq 0$, the transition amplitude maybe evaluated using the so-called time-ordered expansion, also referred to as the Dyson-series~\cite{dyson}. 

Specifically, the transition amplitude between the $r^{\rm th}$ and $s^{\rm th}$ site for the disordered Hamiltonian $\mathcal{H}_{\rm dis}$ in Eq.~\eqref{eq:H_dis} is given by (setting $\hbar = 1$),
\begin{eqnarray}
&& f^{N}_{r,s}(t, \{\Delta_{k}\} \,) \nonumber \\
&=& \langle \textbf{r} | e^{- i (\mathcal{H}_{o}+ \mathcal{H}_{\delta})t} |\textbf{s}\rangle \nonumber \\ 
&=& \langle \textbf{r}| e^{-i \mathcal{H}_{o} t}\mathcal{ T}\left[\exp{\left(-i\int_{0}^{t}e^{i\mathcal{ H}_{o} t'}\,\mathcal{H_{\delta}}\,e^{-i \mathcal{H}_{o} t'}dt'\right)}\right]| \textbf{s}\rangle \nonumber \\  
&=& f^{N}_{r,s}(t) - i\sum_{k=1}^{N} f^{N}_{r,k}(t) \int_{0}^{t}\langle \textbf{k}|e^{i \mathcal{H}_{o} t'} \mathcal{H}_{\delta}e^{-i\mathcal{ H}_{o} t'}|\textbf{s}\rangle dt' \nonumber \\
&& + \;  O(H_{\delta}^{2}), \nonumber
\end{eqnarray}
where $\cT$ is the time-ordering operator which has been expanded to first order in the perturbation in the final equation. As before, $f^{N}_{r,k}(t)$ denotes the transition amplitude between the $r^{\rm th}$ and $k^{\rm th}$ sites in the case of an ideal chain of length $N$, without disorder.  

Thus, using the time-ordered expansion, the transition amplitude in the presence of disorder can be evaluated as a perturbation around the zero-disorder value $f^{N}_{r,s}(t)$, of the form,
\begin{equation}
 f^{N}_{r,s}(t,\{\Delta_{k}\}) = f^{N}_{r,s}(t)+ \sum_{i=1}^{N-1} c^{N}_{i}(t) \Delta_{i} + \sum_{i,j=1}^{N-1}d^{N}_{ij} \Delta_{i}\Delta_{j} + \ldots . \label{eq:transAmp_final}
 \end{equation}
The explicit forms of the complex coefficients $c^{N}_{i}(t)$ are given in Eq.~\eqref{eq:c-coeff} in Appendix~\ref{sec:transAmp_dist}. A similar approach was used in~\cite{Chiara} to study deviations from perfect state transfer due to the presence of disorder in an $XX$ chain.

Using the form of the transition amplitude stated in Eq.~\eqref{eq:transAmp_final}, we obtain the distribution of real part of the transition amplitude $x \equiv \texttt{Re}[f^{N}_{r,s}(t,\{\Delta_{k}\})]\,$, up to first order in the perturbation $H_{\delta}$, as,
\begin{equation}
 \cP^{\delta, N,t}(x) \propto \sum_{j=1}^{2^{N-1}} (-1)^{u_{j}}(q_{j})^{N-2}\,{\rm Sign}[q_{j}], 
 \end{equation}
where  $u_{j} \in [0,1]$ and the $Sign$ function is defined as
\begin{equation}
 {\rm Sign }(x-a) = \left\lbrace \begin{array}{cc}
   -1 , & x< a , \\ 
   0 , & x=a , \\ 
   1, & x>a . \end{array} \right. \nonumber
\end{equation}
The functions $q_{j} \left(x,\texttt{Re}[f^{N}_{r,s}(t)], \{\texttt{Re}[c^{N}_{i}(t)]\} \right)$ are linear combinations of the form,
\begin{equation}
 q_{j} \equiv x - \texttt{Re}[f^{N}_{r,s}(t)] + \delta\sum_{i=1}^{N-1} (-1)^{r_{i}^{j}}\texttt{Re}[c^{N}_{i}(t)]  , 
 \end{equation}
where $ r_{i}^{j} \in [0,1] \, \forall \, i=1,\ldots, N-1$ and $\texttt{Re}[c^{N}_{i}(t)]$ denote the real part of the coefficients in Eq.~\ref{eq:transAmp_final}. Since there are $N-1$ such coefficients for a spin chain of length $N$, the sum over $i$ ranges from $1$ to $N-1$. There are $2^{N-1}$ distinct linear combinations of the form $q_{j}$, corresponding to the $2^{N-1}$ distinct $(N-1)$-bit binary strings parameterized by $r^{j}$, so that the sum over $j$ runs from $1$ to $2^{N-1}$. The form of the distribution is identical for the imaginary part $\texttt{Im}[f^{N}_{r,s}(t,\{\Delta_{k}\})]$, with the real parts of $\{c^{N}_{i}(t)\}$ and $f^{N}_{r,s}(t)$ replaced by their imaginary parts. We refer to Eqs.~\eqref{eq:dist_real2},~\eqref{eq:dist_im2} in Appendix~\ref{sec:transAmp_dist} for a detailed description of the distributions of the real and imaginary parts of the disordered transition amplitude.

\begin{figure} 
\centering
\includegraphics[width=0.47\textwidth]{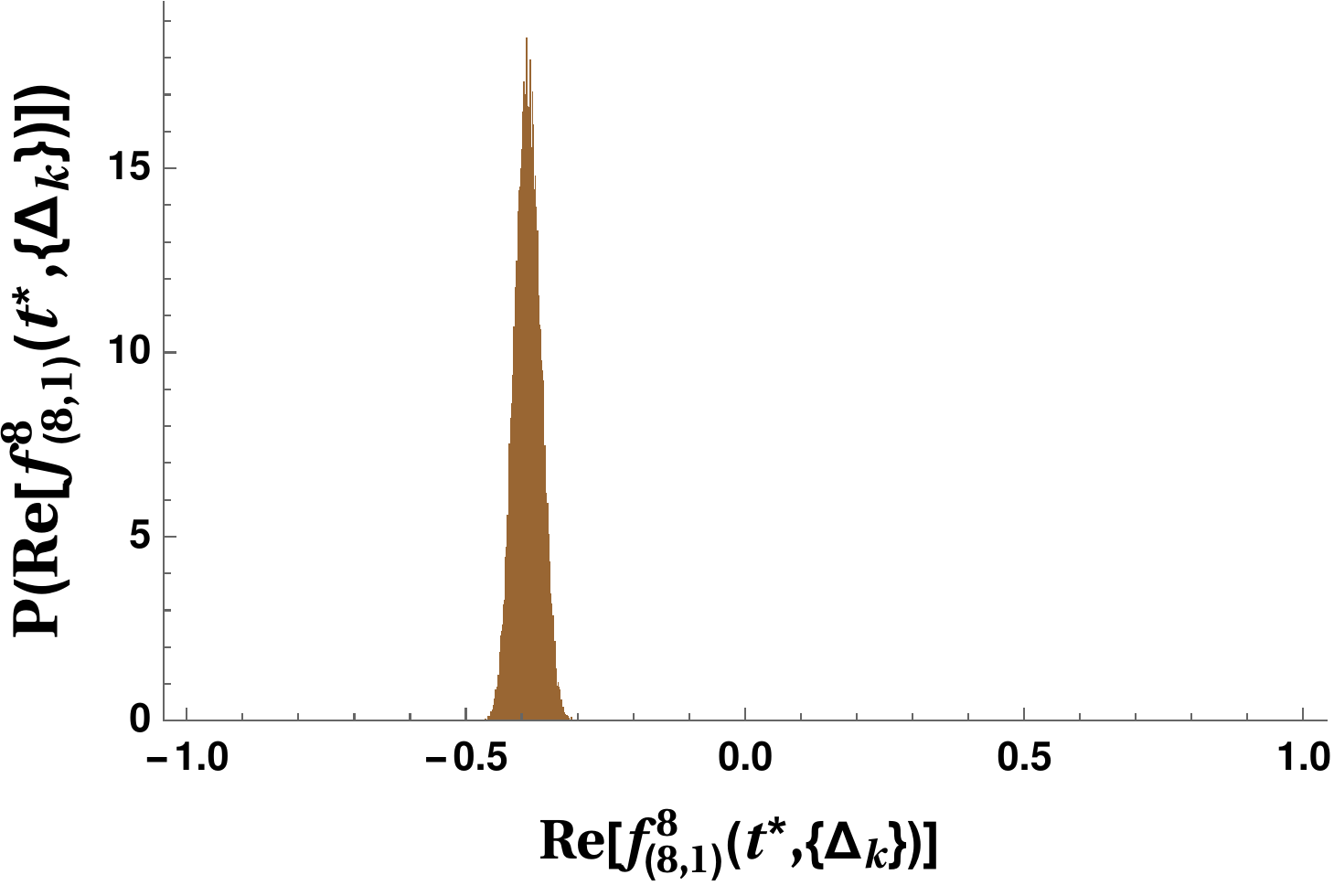}

\caption{Distribution of $\texttt{Re}[f_{8,1}^{8}(t^{*}, \{\Delta_{k}\})]$ 
for different disorder realizations, drawn from a uniform distribution with disorder strength $\delta = 0.001$.}
\label{fig:f_dist1}
\end{figure}

The key salient feature we observe from calculating the distribution functions above is that the limiting distribution in the case of no disorder ($\delta \rightarrow 0$), is indeed a delta distribution peaked around $f^{N}_{r,s}(t)$. Furthermore, in Appendix~\ref{sec:transAmp_dist} we also explicitly evaluate the mean and variance of $f^{N}_{r,s}(t,\{\Delta_{k}\})$ and show that the mean is equal to the zero-disorder value of $f^{N}_{r,s}(t)$, up to $O(\delta^{2})$ (see Eq.~\ref{eq:trans_ampAvg}). The variance goes as $O(\delta^{2})$, as shown in Eq.~\eqref{eq:variance}, making it vanishingly small in the limit of small $\delta$. This observation leads us to propose a modified QEC protocol for state transfer over disordered $XXX$ chains, using an adaptive recovery $\cR_{\rm avg}$ based on the {\it disorder-averaged} transition amplitude $\langle f_{r,s}^{N}(t,\{\Delta_{k}\})\rangle_{\delta}$.

\begin{figure}  
\centering
\includegraphics[width=0.47\textwidth]{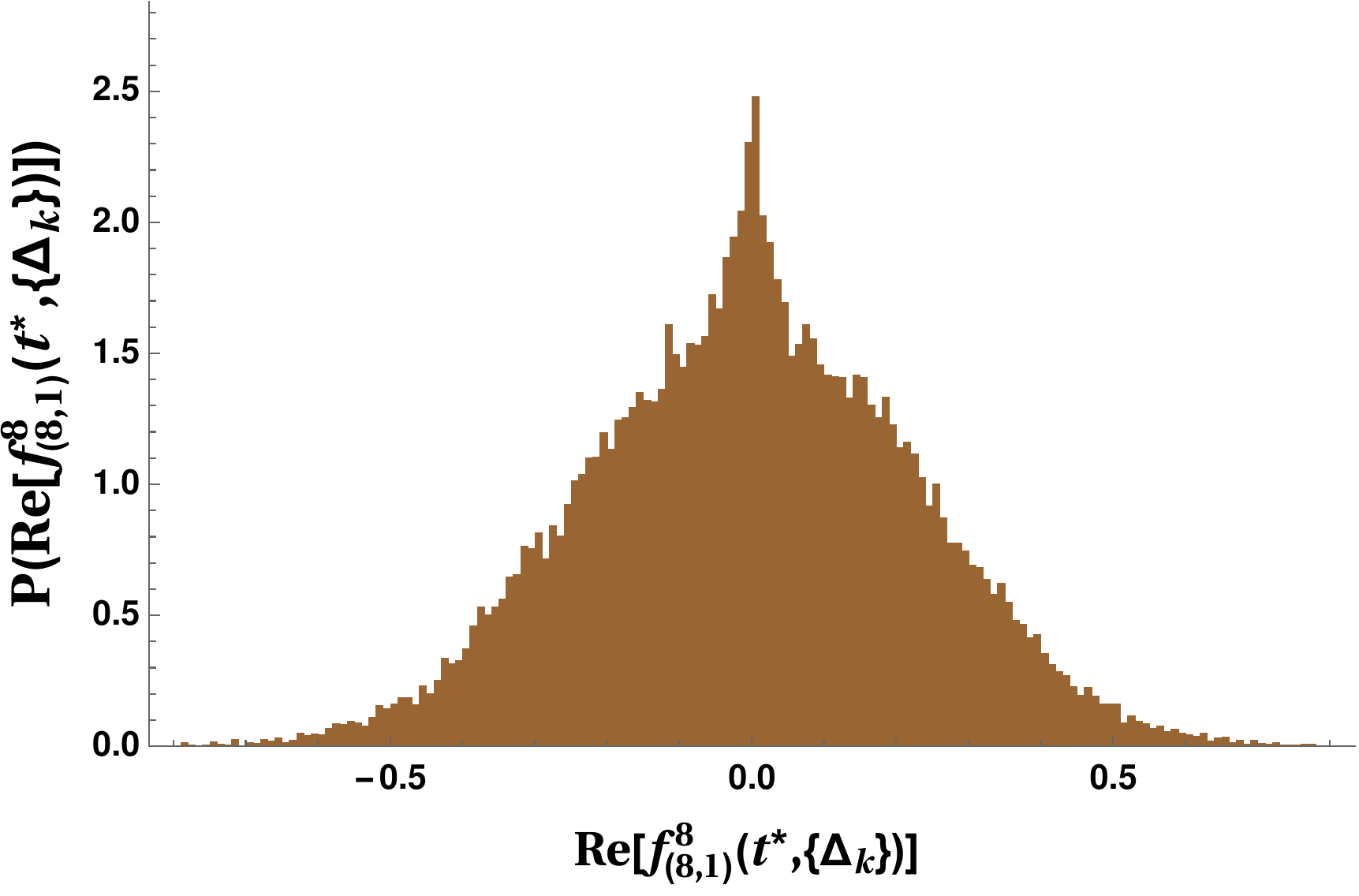}

\caption{Distribution of $\texttt{Re}[f_{8,1}^{8}(t^{*}, \{\Delta_{k}\})]$ 
for different disorder realizations, with disorder strength $\delta = 1$.}
\label{fig:f_dist2}
\end{figure}

The analysis presented thus far holds for any pair of sites $(s,r)$ on a spin chain of length $N$. As an example, we consider the specific case of an $8$-length chain, with $s=1$ and $r=8$. We plot the distribution of the real part of the transition amplitude at some fixed time $t^{*}$, for disorder strengths $\delta = 0.001$ and $\delta=1$, in Figs.~\ref{fig:f_dist1} and~\ref{fig:f_dist2} respectively. We see that when the disorder strength is small enough, the transition amplitude is indeed distributed like a delta function peaked around the zero-disorder value. For large values of $\delta$, the distribution spreads out quite a bit and its mean also shifts closer to zero, giving rise to a very small transition amplitude. The corresponding figures for $\texttt{Im}[f_{8,1}^{8}(t^{*}, \{\Delta_{k}\})]$ are presented in Fig.~\ref{fig:f_dist_Im}. 

\subsection{Adaptive QEC for $1$-d disordered chain}\label{sec:aqec_diorder}
To summarize, the quantum channel for state transfer in the presence of disorder has the same structure as that of the ideal chain, but with a stochastic noise parameter $p \equiv 1 - |f^{N}_{r,s}(t,\{ \Delta_{k} \})|^{2}$, since the transition amplitude $f_{r,s}^{N}(t,\{\Delta_{k}\})$ is now a random variable whose value depends on the random couplings $\{\Delta_{k}\}$. However, as discussed in Sec.~\ref{sec:transAmp_disorder}, for small enough disorder strengths, $f_{r,s}^{N}(t,\{\Delta_{k}\})$ is peaked sharply around its mean value, and we may consider the disorder-averaged amplitude $\langle f_{r,s}^{N}(t,\{\Delta_{k}\})\rangle_{\delta}$ as a good estimate of the noise. 

We therefore propose an adaptive QEC procedure for a disordered $XXX$ chain involving the $4$-qubit code in Eq.~\eqref{eq:4qubit} and a recovery map $\cR_{\rm avg}$ with the same structure as that used in the case of the ideal chain, described in Eq.~\eqref{eq:4qubit_petz}. However, unlike the ideal case, the value of the channel parameter used in the recovery is different from the one in actual noise channel :  the recovery map uses the disorder-averaged amplitude $\langle f_{r,s}^{N}(t,\{\Delta_{k}\})\rangle_{\delta}$, and is therefore independent of the specific disorder realization, whereas the noise channel has the parameter $ f_{r,s}^{N}(t,\{\Delta_{k}\})$ which changes with every realization.

\begin{figure} [h!]
\centering
\includegraphics[width=0.47\textwidth]{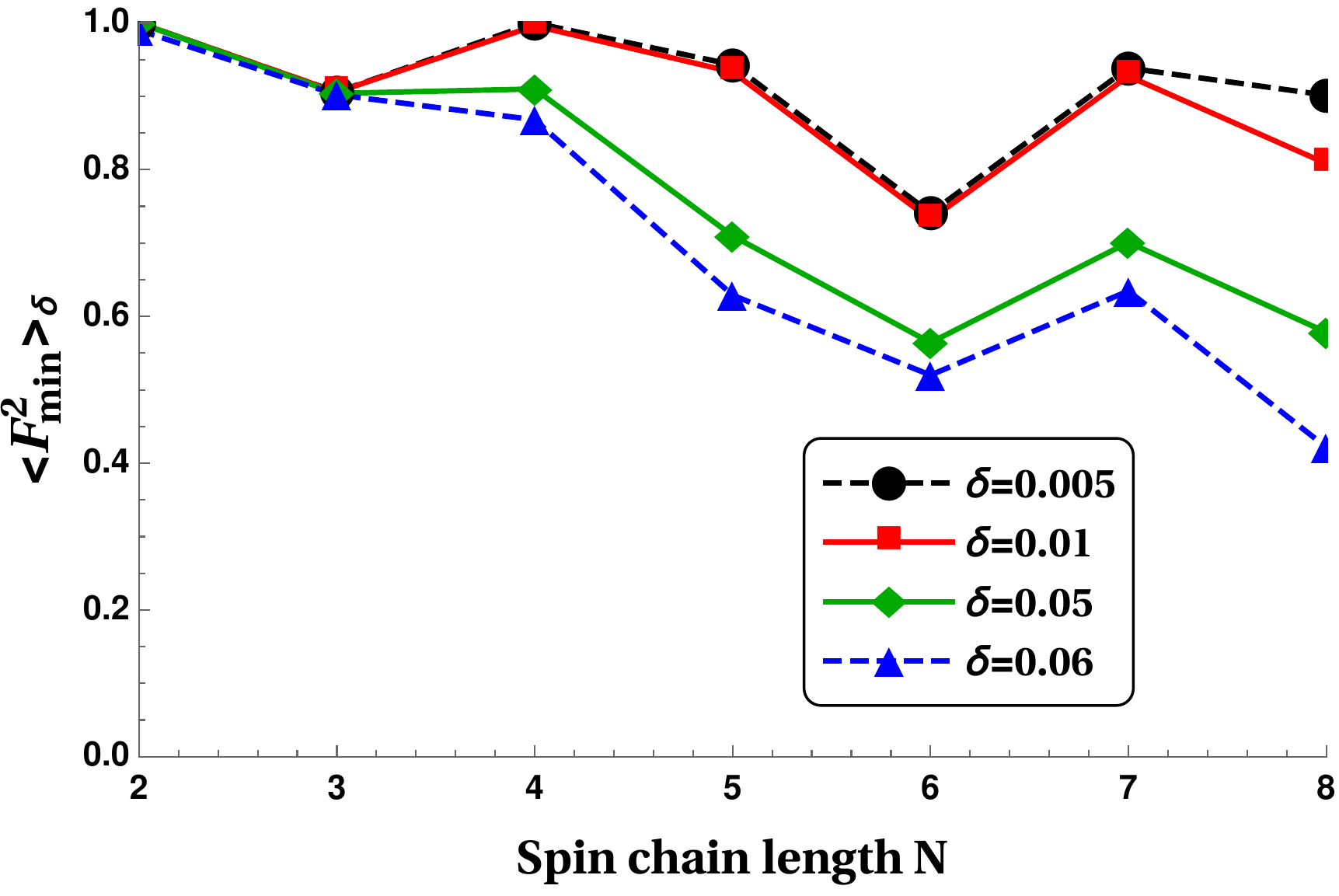}
\caption{Disorder-averaged worst-case fidelity $\langle F^{2}_{\rm min}\rangle_{\delta}$ obtained using the adaptive recovery $\cR_{\rm avg}$.}
\label{fig:disavg_fmin_8}
\end{figure}
To illustrate the performance of this modified recovery map, we present numerical results for quantum state transfer from the first site ($s=1$) to the $8^{\rm th}$ site ($r=8$)on an $8$-spin chain. Fig.~\ref{fig:disavg_fmin_8} shows the disorder-averaged worst-case fidelity $\langle F^{2}_{\rm min}\rangle_{\delta}$ obtained using the $4$-qubit code and the adaptive recovery $\cR_{\rm avg}$, for an $8$-spin chain. For disorder strengths $\delta \leq 0.01$,  our adaptive QEC protocol achieves pretty good transfer, with fidelity-loss $\epsilon < 0.2$. Beyond $\delta \geq 0.06$, we notice that $\langle F_{\min}^{2} \rangle_{\delta} < 0.5$ since the effects of localization are too strong to be counteracted by QEC. 

This is further borne out by our detailed analysis of the distribution of the transition amplitude in the presence of disorder (see Appendix~\ref{sec:transAmp_dist}). In particular, our expressions for the mean and standard deviation of the transition amplitude indicate that until $\delta \leq 0.01$, the disorder-averaged value $\langle f_{r,s}^{N}(t,\{\Delta_{k}\})\rangle_{\delta}$ is close to the value of the transition amplitude in the ideal (zero-disorder) case, and the standard deviation is insignificant compared to the mean. However, as the disorder strength increases further, the disorder-averaged value $\langle f_{r,s}^{N}(t,\{\Delta_{k}\})\rangle_{\delta}$ starts dropping and the standard deviation becomes comparable to the average value. Thus the effective noise parameter of the underlying quantum channel becomes too strong for the QEC procedure to be effective. 

The fact that $\delta = 0.06$ is a threshold of sorts can be seen more directly by studying the variation of the disorder-averaged transition amplitude with disorder strength. Previous studies on localization in disordered chains have used such a quantity, namely  $\langle |f_{n,1}^{N}(t,\{\Delta_{k}\})|^{2} \rangle_{\delta}$, as an indicator of the extent of localization~\cite{ashhab,Chiara}.

\begin{figure} [H]
\centering
\includegraphics[width=0.47\textwidth]{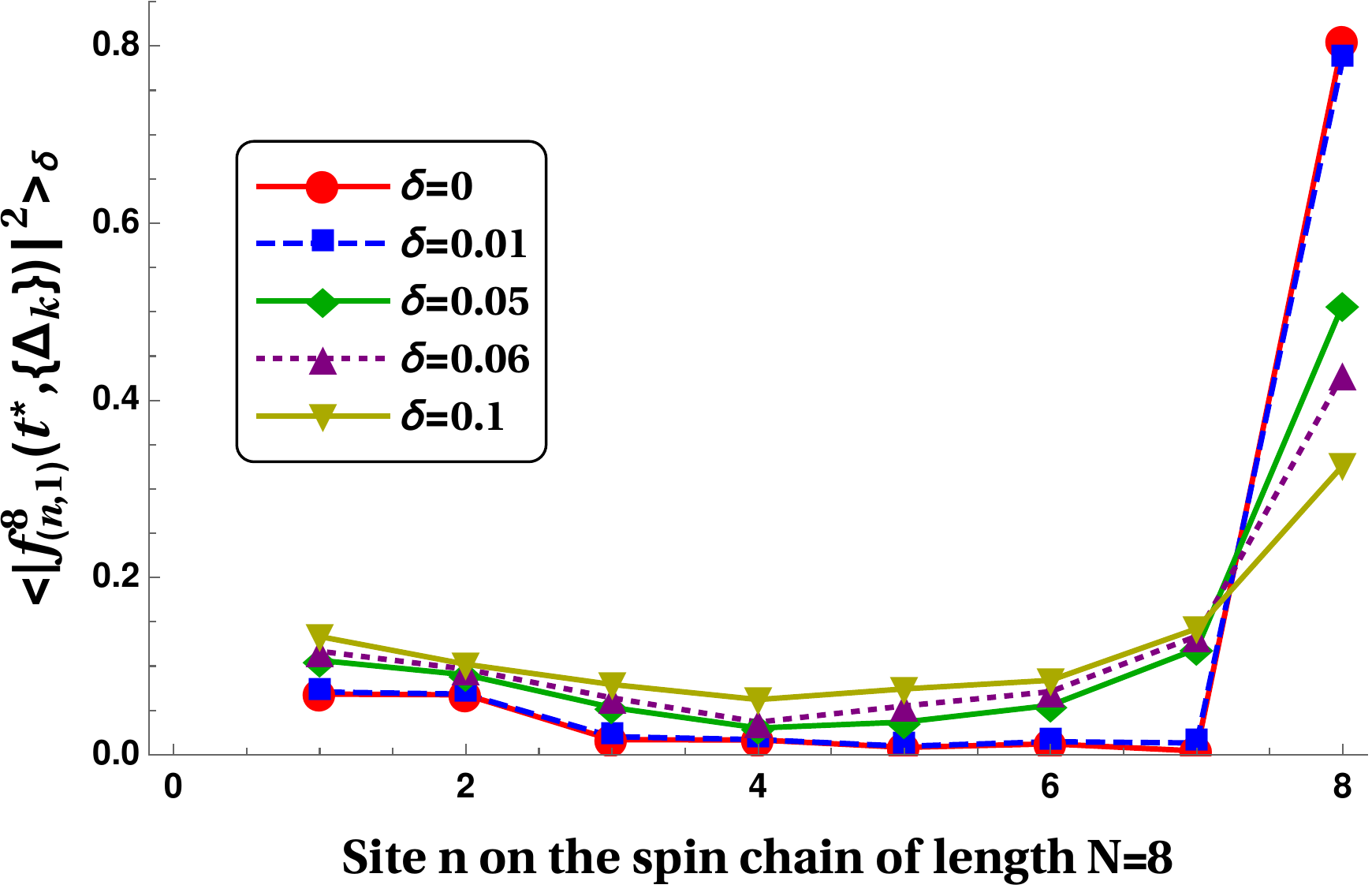}
\caption{$\langle |f_{n,1}^{8}(t)|^{2}\rangle_\delta$ for an $8$-spin chain  as a function of the site $n$.}
\label{fig:transAmp_dis8}
\end{figure}

In Fig.~\ref{fig:transAmp_dis8}, we plot the disorder-averaged transition amplitude $\langle |f_{n,1}^{N}(t,\{\Delta_{k}\})|^{2} \rangle_{\delta}$ for a fixed time $t^{*}$ and different disorder strengths $\delta$, as a function of the receiver site $n$, for the Heisenberg chain in Eq.~\eqref{eq:H_dis}. Empirically, we see that this plot follows an exponential distribution. The curves take the form $e^{-(\alpha n + \beta)/{\rm Loc}}$, where $\alpha, \beta$ are functions of disorder strength $\delta$ and ${\rm Loc}$ is the localization length, i.e. the length at which $\langle |f_{n,1}^{N}(t,\{\Delta_{k}\})|^2\rangle_{\delta}$ falls to $(1/e)$ of its maximum value. We see that with increase in disorder strength $\delta$, the localization effects become more pronounced. 

Specifically, when the disorder strength crosses $\delta=0.06$, the square of the transition amplitude between the ends of the $8$-spin chain falls below $0.43$ on average. However, we know from the fidelity estimate in Theorem~\ref{thm:aqec_fid} that the adaptive QEC protocol improves fidelity if only if $|f_{N,1}^{N}(t)|^{2} > 0.43$. Thus, for end to end state transfer on an $8$-length disordered Heisenberg chain, $\delta=0.06$ is indeed a threshold beyond which the adaptive QEC protocol cannot help in improving fidelity. Since our analysis of the distribution of the transition amplitude presented in Sec.~\ref{sec:transAmp_disorder} as well as the fidelity expression in Theorem~\ref{thm:aqec_fid} hold for any $s,r,N$ we can always identify such a threshold for a specific set of values. 

\section{Conclusions}\label{sec:concl}

We develop a pretty good state transfer protocol based on adaptive quantum error correction (QEC), for a universal class of Hamiltonians which preserve the total spin excitations on a linear spin chain. Based on the structure of the underlying quantum channel, we choose an approximate code and near-optimal, adaptive recovery map, to solve for the fidelity of state transfer explicitly. For the specific case of the ideal Heisenberg chain, our protocol performs as efficiently as perfect-QEC-based protocols. Using repeated QEC on the chain, we are able to achieve high enough fidelity over longer distances for an ideal spin chain. 

In the case of disordered spin chains the underlying quantum channel is stochastic. For the case of a disordered $1$-d Heisenberg chain, we study the distribution of the transition amplitude, which in turn is directly related to the stochastic noise parameter of the noise channel. By suitably adapting the recovery procedure, we demonstrate pretty good transfer on average, for low disorder strengths. 

It is an interesting question as to whether such channel-adapted QEC techniques maybe used to achieve pretty good state transfer for other universal classes, such as the transverse-field Ising model and the $XYZ$-chain. It is also an open problem to obtain an efficient circuit implementation of the adaptive recovery map discussed here.

\section{Acknowledgements}
The authors thank Hui Khoon Ng for insightful discussions. We also thank the anonymous referees for their useful comments and suggestions.

%


\begin{thebibliography}{31}%
\makeatletter
\providecommand \@ifxundefined [1]{%
 \@ifx{#1\undefined}
}%
\providecommand \@ifnum [1]{%
 \ifnum #1\expandafter \@firstoftwo
 \else \expandafter \@secondoftwo
 \fi
}%
\providecommand \@ifx [1]{%
 \ifx #1\expandafter \@firstoftwo
 \else \expandafter \@secondoftwo
 \fi
}%
\providecommand \natexlab [1]{#1}%
\providecommand \enquote  [1]{``#1''}%
\providecommand \bibnamefont  [1]{#1}%
\providecommand \bibfnamefont [1]{#1}%
\providecommand \citenamefont [1]{#1}%
\providecommand \href@noop [0]{\@secondoftwo}%
\providecommand \href [0]{\begingroup \@sanitize@url \@href}%
\providecommand \@href[1]{\@@startlink{#1}\@@href}%
\providecommand \@@href[1]{\endgroup#1\@@endlink}%
\providecommand \@sanitize@url [0]{\catcode `\\12\catcode `\$12\catcode
  `\&12\catcode `\#12\catcode `\^12\catcode `\_12\catcode `\%12\relax}%
\providecommand \@@startlink[1]{}%
\providecommand \@@endlink[0]{}%
\providecommand \url  [0]{\begingroup\@sanitize@url \@url }%
\providecommand \@url [1]{\endgroup\@href {#1}{\urlprefix }}%
\providecommand \urlprefix  [0]{URL }%
\providecommand \Eprint [0]{\href }%
\providecommand \doibase [0]{http://dx.doi.org/}%
\providecommand \selectlanguage [0]{\@gobble}%
\providecommand \bibinfo  [0]{\@secondoftwo}%
\providecommand \bibfield  [0]{\@secondoftwo}%
\providecommand \translation [1]{[#1]}%
\providecommand \BibitemOpen [0]{}%
\providecommand \bibitemStop [0]{}%
\providecommand \bibitemNoStop [0]{.\EOS\space}%
\providecommand \EOS [0]{\spacefactor3000\relax}%
\providecommand \BibitemShut  [1]{\csname bibitem#1\endcsname}%
\let\auto@bib@innerbib\@empty
\bibitem [{\citenamefont {Bose}(2003)}]{bose}%
  \BibitemOpen
  \bibfield  {author} {\bibinfo {author} {\bibfnamefont {S.}~\bibnamefont
  {Bose}},\ }\href@noop {} {\bibfield  {journal} {\bibinfo  {journal} {Physical
  review letters}\ }\textbf {\bibinfo {volume} {91}},\ \bibinfo {pages}
  {207901} (\bibinfo {year} {2003})}\BibitemShut {NoStop}%
\bibitem [{\citenamefont {Christandl}\ \emph {et~al.}(2004)\citenamefont
  {Christandl}, \citenamefont {Datta}, \citenamefont {Ekert},\ and\
  \citenamefont {Landahl}}]{christandl}%
  \BibitemOpen
  \bibfield  {author} {\bibinfo {author} {\bibfnamefont {M.}~\bibnamefont
  {Christandl}}, \bibinfo {author} {\bibfnamefont {N.}~\bibnamefont {Datta}},
  \bibinfo {author} {\bibfnamefont {A.}~\bibnamefont {Ekert}}, \ and\ \bibinfo
  {author} {\bibfnamefont {A.~J.}\ \bibnamefont {Landahl}},\ }\href@noop {}
  {\bibfield  {journal} {\bibinfo  {journal} {Physical review letters}\
  }\textbf {\bibinfo {volume} {92}},\ \bibinfo {pages} {187902} (\bibinfo
  {year} {2004})}\BibitemShut {NoStop}%
\bibitem [{\citenamefont {Christandl}\ \emph {et~al.}(2005)\citenamefont
  {Christandl}, \citenamefont {Datta}, \citenamefont {Dorlas}, \citenamefont
  {Ekert}, \citenamefont {Kay},\ and\ \citenamefont
  {Landahl}}]{christandl2005perfect}%
  \BibitemOpen
  \bibfield  {author} {\bibinfo {author} {\bibfnamefont {M.}~\bibnamefont
  {Christandl}}, \bibinfo {author} {\bibfnamefont {N.}~\bibnamefont {Datta}},
  \bibinfo {author} {\bibfnamefont {T.~C.}\ \bibnamefont {Dorlas}}, \bibinfo
  {author} {\bibfnamefont {A.}~\bibnamefont {Ekert}}, \bibinfo {author}
  {\bibfnamefont {A.}~\bibnamefont {Kay}}, \ and\ \bibinfo {author}
  {\bibfnamefont {A.~J.}\ \bibnamefont {Landahl}},\ }\href@noop {} {\bibfield
  {journal} {\bibinfo  {journal} {Physical Review A}\ }\textbf {\bibinfo
  {volume} {71}},\ \bibinfo {pages} {032312} (\bibinfo {year}
  {2005})}\BibitemShut {NoStop}%
\bibitem [{\citenamefont {Albanese}\ \emph {et~al.}(2004)\citenamefont
  {Albanese}, \citenamefont {Christandl}, \citenamefont {Datta},\ and\
  \citenamefont {Ekert}}]{albanesemirror}%
  \BibitemOpen
  \bibfield  {author} {\bibinfo {author} {\bibfnamefont {C.}~\bibnamefont
  {Albanese}}, \bibinfo {author} {\bibfnamefont {M.}~\bibnamefont
  {Christandl}}, \bibinfo {author} {\bibfnamefont {N.}~\bibnamefont {Datta}}, \
  and\ \bibinfo {author} {\bibfnamefont {A.}~\bibnamefont {Ekert}},\
  }\href@noop {} {\bibfield  {journal} {\bibinfo  {journal} {Physical review
  letters}\ }\textbf {\bibinfo {volume} {93}},\ \bibinfo {pages} {230502}
  (\bibinfo {year} {2004})}\BibitemShut {NoStop}%
\bibitem [{\citenamefont {Karbach}\ and\ \citenamefont
  {Stolze}(2005)}]{karbach}%
  \BibitemOpen
  \bibfield  {author} {\bibinfo {author} {\bibfnamefont {P.}~\bibnamefont
  {Karbach}}\ and\ \bibinfo {author} {\bibfnamefont {J.}~\bibnamefont
  {Stolze}},\ }\href@noop {} {\bibfield  {journal} {\bibinfo  {journal}
  {Physical Review A}\ }\textbf {\bibinfo {volume} {72}},\ \bibinfo {pages}
  {030301} (\bibinfo {year} {2005})}\BibitemShut {NoStop}%
\bibitem [{\citenamefont {Di~Franco}\ \emph {et~al.}(2008)\citenamefont
  {Di~Franco}, \citenamefont {Paternostro},\ and\ \citenamefont {Kim}}]{di}%
  \BibitemOpen
  \bibfield  {author} {\bibinfo {author} {\bibfnamefont {C.}~\bibnamefont
  {Di~Franco}}, \bibinfo {author} {\bibfnamefont {M.}~\bibnamefont
  {Paternostro}}, \ and\ \bibinfo {author} {\bibfnamefont {M.~S.}~\bibnamefont
  {Kim}},\ }\href@noop {} {\bibfield  {journal} {\bibinfo  {journal} {Physical
  review letters}\ }\textbf {\bibinfo {volume} {101}},\ \bibinfo {pages}
  {230502} (\bibinfo {year} {2008})}\BibitemShut {NoStop}%
\bibitem [{\citenamefont {Burgarth}\ and\ \citenamefont
  {Bose}(2005{\natexlab{a}})}]{conclusive}%
  \BibitemOpen
  \bibfield  {author} {\bibinfo {author} {\bibfnamefont {D.}~\bibnamefont
  {Burgarth}}\ and\ \bibinfo {author} {\bibfnamefont {S.}~\bibnamefont
  {Bose}},\ }\href@noop {} {\bibfield  {journal} {\bibinfo  {journal} {Physical
  Review A}\ }\textbf {\bibinfo {volume} {71}},\ \bibinfo {pages} {052315}
  (\bibinfo {year} {2005}{\natexlab{a}})}\BibitemShut {NoStop}%
\bibitem [{\citenamefont {Burgarth}\ and\ \citenamefont
  {Bose}(2005{\natexlab{b}})}]{perfect}%
  \BibitemOpen
  \bibfield  {author} {\bibinfo {author} {\bibfnamefont {D.}~\bibnamefont
  {Burgarth}}\ and\ \bibinfo {author} {\bibfnamefont {S.}~\bibnamefont
  {Bose}},\ }\href@noop {} {\bibfield  {journal} {\bibinfo  {journal} {New
  journal of physics}\ }\textbf {\bibinfo {volume} {7}},\ \bibinfo {pages}
  {135} (\bibinfo {year} {2005}{\natexlab{b}})}\BibitemShut {NoStop}%
\bibitem [{\citenamefont {Burgarth}\ \emph {et~al.}(2005)\citenamefont
  {Burgarth}, \citenamefont {Giovannetti},\ and\ \citenamefont
  {Bose}}]{efficient}%
  \BibitemOpen
  \bibfield  {author} {\bibinfo {author} {\bibfnamefont {D.}~\bibnamefont
  {Burgarth}}, \bibinfo {author} {\bibfnamefont {V.}~\bibnamefont
  {Giovannetti}}, \ and\ \bibinfo {author} {\bibfnamefont {S.}~\bibnamefont
  {Bose}},\ }\href@noop {} {\bibfield  {journal} {\bibinfo  {journal} {Journal
  of Physics A: Mathematical and General}\ }\textbf {\bibinfo {volume} {38}},\
  \bibinfo {pages} {6793} (\bibinfo {year} {2005})}\BibitemShut {NoStop}%
\bibitem [{\citenamefont {Bochkin}\ \emph {et~al.}(2016)\citenamefont
  {Bochkin}, \citenamefont {Doronin}, \citenamefont {Vasil'ev}, \citenamefont
  {Fedorova},\ and\ \citenamefont {Fel'dman}}]{bochkin}%
  \BibitemOpen
  \bibfield  {author} {\bibinfo {author} {\bibfnamefont {G.}~\bibnamefont
  {Bochkin}}, \bibinfo {author} {\bibfnamefont {S.}~\bibnamefont {Doronin}},
  \bibinfo {author} {\bibfnamefont {S.}~\bibnamefont {Vasil'ev}}, \bibinfo
  {author} {\bibfnamefont {A.}~\bibnamefont {Fedorova}}, \ and\ \bibinfo
  {author} {\bibfnamefont {E.}~\bibnamefont {Fel'dman}},\ }in\ \href@noop {}
  {\emph {\bibinfo {booktitle} {International Conference on Micro-and
  Nano-Electronics 2016}}},\ Vol.\ \bibinfo {volume} {10224},\, \bibinfo{edit}{edited by V. F. Lukichev and K. V. Rudenko} (\bibinfo
  {organization} {International Society for Optics and Photonics},\ \bibinfo
  {year} {2016})\ p.\ \bibinfo {pages} {102242E}\BibitemShut {NoStop}%
\bibitem [{\citenamefont {Perez-Leija}\ \emph {et~al.}(2013)\citenamefont
  {Perez-Leija}, \citenamefont {Keil}, \citenamefont {Kay}, \citenamefont
  {Moya-Cessa}, \citenamefont {Nolte}, \citenamefont {Kwek}, \citenamefont
  {Rodr{\'\i}guez-Lara}, \citenamefont {Szameit},\ and\ \citenamefont
  {Christodoulides}}]{perez2013}%
  \BibitemOpen
  \bibfield  {author} {\bibinfo {author} {\bibfnamefont {A.}~\bibnamefont
  {Perez-Leija}}, \bibinfo {author} {\bibfnamefont {R.}~\bibnamefont {Keil}},
  \bibinfo {author} {\bibfnamefont {A.}~\bibnamefont {Kay}}, \bibinfo {author}
  {\bibfnamefont {H.}~\bibnamefont {Moya-Cessa}}, \bibinfo {author}
  {\bibfnamefont {S.}~\bibnamefont {Nolte}}, \bibinfo {author} {\bibfnamefont
  {L.-C.}\ \bibnamefont {Kwek}}, \bibinfo {author} {\bibfnamefont {B.~M.}\
  \bibnamefont {Rodr{\'\i}guez-Lara}}, \bibinfo {author} {\bibfnamefont
  {A.}~\bibnamefont {Szameit}}, \ and\ \bibinfo {author} {\bibfnamefont
  {D.~N.}\ \bibnamefont {Christodoulides}},\ }\href@noop {} {\bibfield
  {journal} {\bibinfo  {journal} {Physical Review A}\ }\textbf {\bibinfo
  {volume} {87}},\ \bibinfo {pages} {012309} (\bibinfo {year}
  {2013})}\BibitemShut {NoStop}%
\bibitem [{\citenamefont {Chapman}\ \emph {et~al.}(2016)\citenamefont
  {Chapman}, \citenamefont {Santandrea}, \citenamefont {Huang}, \citenamefont
  {Corrielli}, \citenamefont {Crespi}, \citenamefont {Yung}, \citenamefont
  {Osellame},\ and\ \citenamefont {Peruzzo}}]{chapman}%
  \BibitemOpen
  \bibfield  {author} {\bibinfo {author} {\bibfnamefont {R.~J.}\ \bibnamefont
  {Chapman}}, \bibinfo {author} {\bibfnamefont {M.}~\bibnamefont {Santandrea}},
  \bibinfo {author} {\bibfnamefont {Z.}~\bibnamefont {Huang}}, \bibinfo
  {author} {\bibfnamefont {G.}~\bibnamefont {Corrielli}}, \bibinfo {author}
  {\bibfnamefont {A.}~\bibnamefont {Crespi}}, \bibinfo {author} {\bibfnamefont
  {M.-H.}\ \bibnamefont {Yung}}, \bibinfo {author} {\bibfnamefont
  {R.}~\bibnamefont {Osellame}}, \ and\ \bibinfo {author} {\bibfnamefont
  {A.}~\bibnamefont {Peruzzo}},\ }\href@noop {} {\bibfield  {journal} {\bibinfo
   {journal} {Nature communications}\ }\textbf {\bibinfo {volume} {7}},\
  \bibinfo {pages} {11339} (\bibinfo {year} {2016})}\BibitemShut {NoStop}%
\bibitem [{\citenamefont {Godsil}\ \emph {et~al.}(2012)\citenamefont {Godsil},
  \citenamefont {Kirkland}, \citenamefont {Severini},\ and\ \citenamefont
  {Smith}}]{godsil2012}%
  \BibitemOpen
  \bibfield  {author} {\bibinfo {author} {\bibfnamefont {C.}~\bibnamefont
  {Godsil}}, \bibinfo {author} {\bibfnamefont {S.}~\bibnamefont {Kirkland}},
  \bibinfo {author} {\bibfnamefont {S.}~\bibnamefont {Severini}}, \ and\
  \bibinfo {author} {\bibfnamefont {J.}~\bibnamefont {Smith}},\ }\href@noop {}
  {\bibfield  {journal} {\bibinfo  {journal} {Physical review letters}\
  }\textbf {\bibinfo {volume} {109}},\ \bibinfo {pages} {050502} (\bibinfo
  {year} {2012})}\BibitemShut {NoStop}%
\bibitem [{\citenamefont {Banchi}\ \emph {et~al.}(2017)\citenamefont {Banchi},
  \citenamefont {Coutinho}, \citenamefont {Godsil},\ and\ \citenamefont
  {Severini}}]{godsil}%
  \BibitemOpen
  \bibfield  {author} {\bibinfo {author} {\bibfnamefont {L.}~\bibnamefont
  {Banchi}}, \bibinfo {author} {\bibfnamefont {G.}~\bibnamefont {Coutinho}},
  \bibinfo {author} {\bibfnamefont {C.}~\bibnamefont {Godsil}}, \ and\ \bibinfo
  {author} {\bibfnamefont {S.}~\bibnamefont {Severini}},\ }\href@noop {}
  {\bibfield  {journal} {\bibinfo  {journal} {Journal of Mathematical Physics}\
  }\textbf {\bibinfo {volume} {58}},\ \bibinfo {pages} {032202} (\bibinfo
  {year} {2017})}\BibitemShut {NoStop}%
\bibitem [{\citenamefont {Osborne}\ and\ \citenamefont
  {Linden}(2004)}]{osborne}%
  \BibitemOpen
  \bibfield  {author} {\bibinfo {author} {\bibfnamefont {T.~J.}\ \bibnamefont
  {Osborne}}\ and\ \bibinfo {author} {\bibfnamefont {N.}~\bibnamefont
  {Linden}},\ }\href@noop {} {\bibfield  {journal} {\bibinfo  {journal}
  {Physical Review A}\ }\textbf {\bibinfo {volume} {69}},\ \bibinfo {pages}
  {052315} (\bibinfo {year} {2004})}\BibitemShut {NoStop}%
\bibitem [{\citenamefont {Haselgrove}(2005)}]{hasel}%
  \BibitemOpen
  \bibfield  {author} {\bibinfo {author} {\bibfnamefont {H.~L.}\ \bibnamefont
  {Haselgrove}},\ }\href@noop {} {\bibfield  {journal} {\bibinfo  {journal}
  {Physical Review A}\ }\textbf {\bibinfo {volume} {72}},\ \bibinfo {pages}
  {062326} (\bibinfo {year} {2005})}\BibitemShut {NoStop}%
\bibitem [{\citenamefont {Ashhab}(2015)}]{ashhab}%
  \BibitemOpen
  \bibfield  {author} {\bibinfo {author} {\bibfnamefont {S.}~\bibnamefont
  {Ashhab}},\ }\href@noop {} {\bibfield  {journal} {\bibinfo  {journal}
  {Physical Review A}\ }\textbf {\bibinfo {volume} {92}},\ \bibinfo {pages}
  {062305} (\bibinfo {year} {2015})}\BibitemShut {NoStop}%
\bibitem [{\citenamefont {De~Chiara}\ \emph {et~al.}(2005)\citenamefont
  {De~Chiara}, \citenamefont {Rossini}, \citenamefont {Montangero},\ and\
  \citenamefont {Fazio}}]{Chiara}%
  \BibitemOpen
  \bibfield  {author} {\bibinfo {author} {\bibfnamefont {G.}~\bibnamefont
  {De~Chiara}}, \bibinfo {author} {\bibfnamefont {D.}~\bibnamefont {Rossini}},
  \bibinfo {author} {\bibfnamefont {S.}~\bibnamefont {Montangero}}, \ and\
  \bibinfo {author} {\bibfnamefont {R.}~\bibnamefont {Fazio}},\ }\href@noop {}
  {\bibfield  {journal} {\bibinfo  {journal} {Physical Review A}\ }\textbf
  {\bibinfo {volume} {72}},\ \bibinfo {pages} {012323} (\bibinfo {year}
  {2005})}\BibitemShut {NoStop}%
\bibitem [{\citenamefont {Nielsen}\ and\ \citenamefont
  {Chuang}(2000)}]{Nielsen}%
  \BibitemOpen
  \bibfield  {author} {\bibinfo {author} {\bibfnamefont {M.}~\bibnamefont
  {Nielsen}}\ and\ \bibinfo {author} {\bibfnamefont {I.}~\bibnamefont
  {Chuang}},\ }\href@noop {} {\emph {\bibinfo {title} {Quantum Computation and
  Quantum Information}}}\ (\bibinfo  {publisher} {Cambridge University Press, Cambridge},\
  \bibinfo {year} {2000})\BibitemShut {NoStop}%
\bibitem [{\citenamefont {Kay}(2016)}]{kay}%
  \BibitemOpen
  \bibfield  {author} {\bibinfo {author} {\bibfnamefont {A.}~\bibnamefont
  {Kay}},\ }\href@noop {} {\bibfield  {journal} {\bibinfo  {journal} {Physical
  Review A}\ }\textbf {\bibinfo {volume} {93}},\ \bibinfo {pages} {042320}
  (\bibinfo {year} {2016})}\BibitemShut {NoStop}%
\bibitem [{\citenamefont {Kay}(2018)}]{kay2018perfect}%
  \BibitemOpen
  \bibfield  {author} {\bibinfo {author} {\bibfnamefont {A.}~\bibnamefont
  {Kay}},\ }\href@noop {} {\bibfield  {journal} {\bibinfo  {journal} {Physical
  Review A}\ }\textbf {\bibinfo {volume} {97}},\ \bibinfo {pages} {032317}
  (\bibinfo {year} {2018})}\BibitemShut {NoStop}%
\bibitem [{\citenamefont {Allcock}\ and\ \citenamefont
  {Linden}(2009)}]{allcock}%
  \BibitemOpen
  \bibfield  {author} {\bibinfo {author} {\bibfnamefont {J.}~\bibnamefont
  {Allcock}}\ and\ \bibinfo {author} {\bibfnamefont {N.}~\bibnamefont
  {Linden}},\ }\href@noop {} {\bibfield  {journal} {\bibinfo  {journal}
  {Physical review letters}\ }\textbf {\bibinfo {volume} {102}},\ \bibinfo
  {pages} {110501} (\bibinfo {year} {2009})}\BibitemShut {NoStop}%
\bibitem [{\citenamefont {Leung}\ \emph {et~al.}(1997)\citenamefont {Leung},
  \citenamefont {Nielsen}, \citenamefont {Chuang},\ and\ \citenamefont
  {Yamamoto}}]{Leung}%
  \BibitemOpen
  \bibfield  {author} {\bibinfo {author} {\bibfnamefont {D.~W.}\ \bibnamefont
  {Leung}}, \bibinfo {author} {\bibfnamefont {M.~A.}\ \bibnamefont {Nielsen}},
  \bibinfo {author} {\bibfnamefont {I.~L.}\ \bibnamefont {Chuang}}, \ and\
  \bibinfo {author} {\bibfnamefont {Y.}~\bibnamefont {Yamamoto}},\ }\href@noop
  {} {\bibfield  {journal} {\bibinfo  {journal} {Phys. Rev. A}\ }\textbf
  {\bibinfo {volume} {56}},\ \bibinfo {pages} {2567} (\bibinfo {year}
  {1997})}\BibitemShut {NoStop}%
\bibitem [{\citenamefont {Fletcher}\ \emph {et~al.}(2008)\citenamefont
  {Fletcher}, \citenamefont {Shor},\ and\ \citenamefont {Win}}]{Fletcher}%
  \BibitemOpen
  \bibfield  {author} {\bibinfo {author} {\bibfnamefont {A.~S.}\ \bibnamefont
  {Fletcher}}, \bibinfo {author} {\bibfnamefont {P.~W.}\ \bibnamefont {Shor}},
  \ and\ \bibinfo {author} {\bibfnamefont {M.~Z.}\ \bibnamefont {Win}},\
  }\href@noop {} {\bibfield  {journal} {\bibinfo  {journal} {IEEE Trans. Info.
  Theory}\ }\textbf {\bibinfo {volume} {54}},\ \bibinfo {pages} {5705}
  (\bibinfo {year} {2008})}\BibitemShut {NoStop}%
\bibitem [{\citenamefont {Ng}\ and\ \citenamefont
  {Mandayam}(2010)}]{HKN_PM2010}%
  \BibitemOpen
  \bibfield  {author} {\bibinfo {author} {\bibfnamefont {H.~K.}\ \bibnamefont
  {Ng}}\ and\ \bibinfo {author} {\bibfnamefont {P.}~\bibnamefont {Mandayam}},\
  }\href@noop {} {\bibfield  {journal} {\bibinfo  {journal} {Physical Review
  A}\ }\textbf {\bibinfo {volume} {81}},\ \bibinfo {pages} {062342} (\bibinfo
  {year} {2010})}\BibitemShut {NoStop}%
\bibitem [{\citenamefont {Cafaro}\ and\ \citenamefont {van
  Loock}(2014{\natexlab{a}})}]{cafaro}%
  \BibitemOpen
  \bibfield  {author} {\bibinfo {author} {\bibfnamefont {C.}~\bibnamefont
  {Cafaro}}\ and\ \bibinfo {author} {\bibfnamefont {P.}~\bibnamefont {van
  Loock}},\ }\href@noop {} {\bibfield  {journal} {\bibinfo  {journal} {Physical
  Review A}\ }\textbf {\bibinfo {volume} {89}},\ \bibinfo {pages} {022316}
  (\bibinfo {year} {2014}{\natexlab{a}})}\BibitemShut {NoStop}%
\bibitem [{\citenamefont {Anderson}(1958)}]{anderson}%
  \BibitemOpen
  \bibfield  {author} {\bibinfo {author} {\bibfnamefont {P.~W.}\ \bibnamefont
  {Anderson}},\ }\href@noop {} {\bibfield  {journal} {\bibinfo  {journal}
  {Physical review}\ }\textbf {\bibinfo {volume} {109}},\ \bibinfo {pages}
  {1492} (\bibinfo {year} {1958})}\BibitemShut {NoStop}%
\bibitem [{\citenamefont {Kay}(2010)}]{kayreview}%
  \BibitemOpen
  \bibfield  {author} {\bibinfo {author} {\bibfnamefont {A.}~\bibnamefont
  {Kay}},\ }\href@noop {} {\bibfield  {journal} {\bibinfo  {journal}
  {International Journal of Quantum Information}\ }\textbf {\bibinfo {volume}
  {8}},\ \bibinfo {pages} {641} (\bibinfo {year} {2010})}\BibitemShut {NoStop}%
\bibitem [{\citenamefont {Piedrafita}\ and\ \citenamefont
  {Renes}(2017)}]{AD_reliable2017}%
  \BibitemOpen
  \bibfield  {author} {\bibinfo {author} {\bibfnamefont {{\'A}.}~\bibnamefont
  {Piedrafita}}\ and\ \bibinfo {author} {\bibfnamefont {J.~M.}\ \bibnamefont
  {Renes}},\ }\href@noop {} {\bibfield  {journal} {\bibinfo  {journal}
  {Physical review letters}\ }\textbf {\bibinfo {volume} {119}},\ \bibinfo
  {pages} {250501} (\bibinfo {year} {2017})}\BibitemShut {NoStop}%
\bibitem [{\citenamefont {Cafaro}\ and\ \citenamefont {van
  Loock}(2014{\natexlab{b}})}]{cafaro2014simple}%
  \BibitemOpen
  \bibfield  {author} {\bibinfo {author} {\bibfnamefont {C.}~\bibnamefont
  {Cafaro}}\ and\ \bibinfo {author} {\bibfnamefont {P.}~\bibnamefont {van
  Loock}},\ }\href@noop {} {\bibfield  {journal} {\bibinfo  {journal} {Open
  Systems \& Information Dynamics}\ }\textbf {\bibinfo {volume} {21}},\
  \bibinfo {pages} {1450002} (\bibinfo {year}
  {2014}{\natexlab{b}})}\BibitemShut {NoStop}%
\bibitem [{\citenamefont {Dyson}(1949)}]{dyson}%
  \BibitemOpen
  \bibfield  {author} {\bibinfo {author} {\bibfnamefont {F.~J.}\ \bibnamefont
  {Dyson}},\ }\href@noop {} {\bibfield  {journal} {\bibinfo  {journal}
  {Physical Review}\ }\textbf {\bibinfo {volume} {75}},\ \bibinfo {pages} {486}
  (\bibinfo {year} {1949})}\BibitemShut {NoStop}%
  
\end{thebibliography}

\appendix
\begin{widetext}
\section{Effect of noise channel $\cE$ on $4$-qubit code}\label{sec:E(P)}

We note the following structure for the Kraus operators of the $4$-qubit channel, by expanding them in the $4$-qubit computational basis. First, we note that the only Kraus operator diagonal in the computational basis is $E_{0}^{\otimes 4}$, with diagonal entry $e^{ij\Theta}|f_{r,s}^{N}(t)|^{j}$, corresponding to those basis vectors with $j$ $1$'s in them. 
All the other operators are off-diagonal matrices with support on some subset of computational basis states. For example, a three-qubit error operator (involving $E_{1}$ in three of the four  qubits) is of the form,
\begin{equation}
E_{0}\otimes E_{1}^{\otimes 3} = (1-|f^{N}_{r,s}(t)|^{2})^{3/2}|0000\rangle\langle 0111|+ e^{i\Theta}|f^{N}_{r,s}(t)|(1-|f^{N}_{r,s}(t)|^{2})^{3/2}|1000\rangle\langle 1111| . \label{eq:E_03}
\end{equation}
The remaining three-qubit errors are of the same form, with the strings $\{0111, 1000\}$ replaced by their permutations. Similarly, an operator which has $E_{1}$ errors on two of the qubits is a linear combination of the form,
\begin{eqnarray}
E_{0}^{\otimes 2}\otimes E_{1}^{\otimes 2} &=& (1-|f^{N}_{r,s}(t)|^{2})|0000\rangle\langle 0011| + e^{2i\Theta}|f^{N}_{r,s}(t)|^{2}(1-|f^{N}_{r,s}(t)|^{2})|1100\rangle\langle 1111| \nonumber \\&& +  e^{i\Theta}|f^{N}_{r,s}(t)|(1-|f^{N}_{r,s}(t)|^{2})\left(|0100\rangle\langle 0111| + |1000\rangle\langle 1011| \right). \label{eq:E_02}
\end{eqnarray}
Other two-qubit error operators are realized by replacing the strings $\{0011,1100,0100,1000\}$ with permutations thereof. A single-qubit error operator, with $E_{1}$ error on only one of the qubits has the form,
\begin{eqnarray}
 E_{0}^{\otimes 3}\otimes E_{1} &=& \sqrt{1-|f^{N}_{r,s}(t)|^{2}}|0000\rangle \langle 0001| + e^{i\Theta}|f^{N}_{r,s}(t)|\sqrt{1-|f^{N}_{r,s}(t)|^{2}}\left(|0010\rangle\langle 0011| + |0100\rangle\langle 0101| + |1000\rangle\langle1001| \right) \nonumber \\ 
&+& e^{2i\Theta} |f^{N}_{r,s}(t)|^{2}\sqrt{1-|f^{N}_{r,s}(t)|^{2}}\left(|1100\rangle\langle 1101| + |0110\rangle\langle 0111| + |1010\rangle\langle 1011| \right)  \nonumber \\ && + e^{3i\Theta} |f^{N}_{r,s}(t)|^{3}\sqrt{1-|f^{N}_{r,s}(t)|^{2}}|1110\rangle\langle 1111| .  \label{eq:E_01}
\end{eqnarray}
Finally, the four-qubit error operator $E_{1}^{\otimes 4}$ is of the form,
\begin{equation}
E_{1}^{\otimes 4} = (1-|f^{N}_{r,s}(t)|^{2})^{2} |0000\rangle \langle 1111| . \label{eq:E_14}
\end{equation}

We next explicitly write out the operator $\small{\cE^{\otimes 4}(P)}$ in the computational basis of the $4$-qubit space.
\begin{equation}
\small{\cE^{\otimes 4}(P)} = 
 \left[
\begin{array}{*{16}c}
 \cQ_{1} & 0 & 0 & 0 & 0 & 0 & 0 & 0 & 0 & 0 & 0 & 0 & 0 & 0 & 0 &  e^{-4 i \Theta }\, \cQ_{17} \\
 0 & \cQ_{2} & 0 & 0 & 0 & 0 & 0 & 0 & 0 & 0 & 0 & 0 & 0 & 0 & 0 & 0 \\
 0 & 0 &\cQ_{3} & 0 & 0 & 0 & 0 & 0 & 0 & 0 & 0 & 0 & 0 & 0 & 0 & 0 \\
 0 & 0 & 0 & \cQ_{4}& 0 & 0 & 0 & 0 & 0 & 0 & 0 & 0 & \cQ_{18}& 0 & 0 & 0 \\
 0 & 0 & 0 & 0 & \cQ_{5} & 0 & 0 & 0 & 0 & 0 & 0 & 0 & 0 & 0 & 0 & 0 \\
 0 & 0 & 0 & 0 & 0 & \cQ_{6} & 0 & 0 & 0 & 0 & 0 & 0 & 0 & 0 & 0 & 0 \\
 0 & 0 & 0 & 0 & 0 & 0 & \cQ_{7}& 0 & 0 & 0 & 0 & 0 & 0 & 0 & 0 & 0 \\
 0 & 0 & 0 & 0 & 0 & 0 & 0 & \cQ_{8} & 0 & 0 & 0 & 0 & 0 & 0 & 0 & 0 \\
 0 & 0 & 0 & 0 & 0 & 0 & 0 & 0 & \cQ_{9} & 0 & 0 & 0 & 0 & 0 & 0 & 0 \\
 0 & 0 & 0 & 0 & 0 & 0 & 0 & 0 & 0 &\cQ_{10}& 0 & 0 & 0 & 0 & 0 & 0 \\
 0 & 0 & 0 & 0 & 0 & 0 & 0 & 0 & 0 & 0 &\cQ_{11} & 0 & 0 & 0 & 0 & 0 \\
 0 & 0 & 0 & 0 & 0 & 0 & 0 & 0 & 0 & 0 & 0 & \cQ_{12} & 0 & 0 & 0 & 0 \\
 0 & 0 & 0 & \cQ_{18} & 0 & 0 & 0 & 0 & 0 & 0 & 0 & 0 & \cQ_{13} & 0 & 0 & 0 \\
 0 & 0 & 0 & 0 & 0 & 0 & 0 & 0 & 0 & 0 & 0 & 0 & 0 & \cQ_{14}& 0 & 0 \\
 0 & 0 & 0 & 0 & 0 & 0 & 0 & 0 & 0 & 0 & 0 & 0 & 0 & 0 & \cQ_{15} & 0 \\
 e^{4 i \Theta }\,\cQ_{17}& 0 & 0 & 0 & 0 & 0 & 0 & 0 & 0 & 0 & 0 & 0 & 0 & 0 & 0 & \cQ_{16} \\
\end{array}
\right], \nonumber
\end{equation}
with $\{\cQ_{i}\}$ denoting polynomial functions of the transition amplitude $|f^{N}_{r,s}(t)|$. In terms of the rank-$1$ projectors onto the computational basis states, we may write $\small{\cE^{\otimes 4}(P)}$ as,
\begin{equation}
 \small{\cE^{\otimes 4}(P)}= \sum_{i=1}^{16}\cQ_{i}|i\rangle\langle i| + e^{-4 i \Theta } \cQ_{17}|0000\rangle\langle 1111| + e^{i 4\Theta} \cQ_{17}|1111\rangle\langle0000| + \cQ_{18}(|1100\rangle \langle0011| +|0011\rangle \langle1100|),
\end{equation}
wherein $|i\rangle \in \{|0000\rangle,\ldots,|0100\rangle, \ldots, |1111\rangle\}$ denote the computational basis states of the $4$-qubit space.

Similarly, we can also express the pseudo-inverse $\small{\cE^{\otimes 4}(P)^{-1/2}}$ in the $4$-qubit computational basis, as follows:
\begin{equation}
 \small{\cE^{\otimes 4}(P)^{-1/2}} = 
 \left[
\begin{array}{*{16}c}
 \cG_{1} & 0 & 0 & 0 & 0 & 0 & 0 & 0 & 0 & 0 & 0 & 0 & 0 & 0 & 0 &  e^{-4 i \Theta }\, \cG_{17} \\
 0 & \cG_{2} & 0 & 0 & 0 & 0 & 0 & 0 & 0 & 0 & 0 & 0 & 0 & 0 & 0 & 0 \\
 0 & 0 &\cG_{3} & 0 & 0 & 0 & 0 & 0 & 0 & 0 & 0 & 0 & 0 & 0 & 0 & 0 \\
 0 & 0 & 0 & \cG_{4}& 0 & 0 & 0 & 0 & 0 & 0 & 0 & 0 & \cG_{18}& 0 & 0 & 0 \\
 0 & 0 & 0 & 0 & \cG_{5} & 0 & 0 & 0 & 0 & 0 & 0 & 0 & 0 & 0 & 0 & 0 \\
 0 & 0 & 0 & 0 & 0 & \cG_{6} & 0 & 0 & 0 & 0 & 0 & 0 & 0 & 0 & 0 & 0 \\
 0 & 0 & 0 & 0 & 0 & 0 & \cG_{7}& 0 & 0 & 0 & 0 & 0 & 0 & 0 & 0 & 0 \\
 0 & 0 & 0 & 0 & 0 & 0 & 0 & \cG_{8} & 0 & 0 & 0 & 0 & 0 & 0 & 0 & 0 \\
 0 & 0 & 0 & 0 & 0 & 0 & 0 & 0 & \cG_{9} & 0 & 0 & 0 & 0 & 0 & 0 & 0 \\
 0 & 0 & 0 & 0 & 0 & 0 & 0 & 0 & 0 &\cG_{10}& 0 & 0 & 0 & 0 & 0 & 0 \\
 0 & 0 & 0 & 0 & 0 & 0 & 0 & 0 & 0 & 0 &\cG_{11} & 0 & 0 & 0 & 0 & 0 \\
 0 & 0 & 0 & 0 & 0 & 0 & 0 & 0 & 0 & 0 & 0 & \cG_{12} & 0 & 0 & 0 & 0 \\
 0 & 0 & 0 & \cG_{18} & 0 & 0 & 0 & 0 & 0 & 0 & 0 & 0 & \cG_{13} & 0 & 0 & 0 \\
 0 & 0 & 0 & 0 & 0 & 0 & 0 & 0 & 0 & 0 & 0 & 0 & 0 & \cG_{14}& 0 & 0 \\
 0 & 0 & 0 & 0 & 0 & 0 & 0 & 0 & 0 & 0 & 0 & 0 & 0 & 0 & \cG_{15} & 0 \\
 e^{4 i \Theta }\,\cG_{17}& 0 & 0 & 0 & 0 & 0 & 0 & 0 & 0 & 0 & 0 & 0 & 0 & 0 & 0 & \cG_{16} \\
\end{array}
\right], 
\end{equation}
with $\{\cG_{i}\}$ denoting a set of polynomials in $|f^{N}_{r,s}(t)|$. In terms of the rank-$1$ projectors onto the computational basis states, we have,
\begin{equation}
 \small{\cE^{\otimes 4}(P)^{-1/2}}= \sum_{i=1}^{16}\cG_{i}|i\rangle\langle i| + e^{-4 i \Theta } \cG_{17}|0000\rangle\langle1111| + e^{i 4\Theta} \cG_{17}|1111\rangle\langle0000| + \cG_{18}(|1100\rangle \langle0011| +|0011\rangle \langle1100|). \label{eq:EP_inv}
\end{equation}
Upon sandwiching the operator in Eq.~\eqref{eq:EP_inv} between the different error operators of the four-qubit noise channel (as described in Eqs.~\eqref{eq:E_03},~\eqref{eq:E_02},~\eqref{eq:E_01},~\eqref{eq:E_14}) and their adjoints, it is easy to see that the phases cancel out everywhere. In other words, the Kraus operators of the composite channel comprising noise and recovery are all independent of the phase $\Theta$ of the transition amplitude.

\section{Distribution of the transition amplitude for a disordered $XXX$ chain}\label{sec:transAmp_dist}

Here we derive the distribution of the transition amplitude $f^{N}_{r,s}(t,\{\Delta_{k}\})$ for the disordered $XXX$ chain described in Eq.~\eqref{eq:H_dis}, as a function of time $t$ and disorder strength $\delta$. Recall that the transition amplitude between the $r^{\rm th}$ and $s^{\rm th}$ site for the disordered Hamiltonian $\mathcal{H}_{\rm dis}$ is given by,
\begin{equation}
f^{N}_{r,s}(t,\{\Delta_{k}\}) = \langle \textbf{r} | e^{- i (\mathcal{H}_{o}+ \mathcal{H}_{\delta})t} |\textbf{s}\rangle = \langle \textbf{r}| e^{-i \mathcal{H}_{o} t}\mathcal{ T}\left[\exp{\left(-i\int_{0}^{t}e^{i\mathcal{ H}_{o} t'}\,\mathcal{H_{\delta}}\,e^{-i \mathcal{H}_{o} t'}dt'\right)}\right]| \textbf{s}\rangle, \label{eq:transAmp_delta}
\end{equation}
where $\cT$ denotes the time-ordering operator. We first expand the time-ordered perturbation series in Eq.~\eqref{eq:transAmp_delta} as follows, 
\begin{eqnarray}
f^{N}_{r,s}(t,\{\Delta_{k}\}) &=&\sum_{k=1}^{N}\langle \textbf{r}| e^{-i \mathcal{H}_{o} t} |\textbf{k}\rangle \langle \textbf{k}|\mathcal{T}\left[e^{(-i\int_{0}^{t}e^{i\mathcal{ H}_{o} t'}\mathcal{H_{\delta}}e^{-i \mathcal{H}_{o} t'}dt')}\right] |\textbf{s}\rangle  \nonumber \\
&=& \sum_{k=1}^{N}f^{N}_{r,k}(t) \langle \textbf{k} | \, I - i O(\mathcal{H}_{\delta}) + \frac{i^{2}}{2!} O(\mathcal{H}_{\delta}^{2}) + \ldots \, | \textbf{s}\rangle \label{eq:transAmp_disorder}
\end{eqnarray}
where, $f^{N}_{r,k}(t) = \langle \textbf{r}| e^{-i\mathcal{ H}_{o} t} |\textbf{k}\rangle$ is the transition amplitude in the absence of disorder. Expanding the first order term ($O(\mathcal{H}_{\delta})$) as a time-ordered form, we have,
\begin{eqnarray}
\langle \textbf {k} |O(\mathcal{H}_{\delta}) |\textbf{s} \rangle &=& \int_{0}^{t}\langle \textbf{k}|e^{i \mathcal{H}_{o} t'} \mathcal{H}_{\delta}e^{-i\mathcal{ H}_{o} t'}|\textbf{s}\rangle dt' \nonumber \\
&=& \sum_{l,m=1}^{N} \int_{0}^{t}\langle \textbf{k}|e^{i \mathcal{H}_{o} t'} |\textbf{l}\rangle \langle \textbf{l}|\mathcal{H}_{\delta}|\textbf{m}\rangle\langle \textbf{m}|e^{-i\mathcal{ H}_{o} t'}|\textbf{s}\rangle dt' \label{eq:first_order1}
\end{eqnarray}
where,
\begin{equation}
\langle \textbf{l} | \mathcal{H}_{\delta} |\textbf{m}\rangle = \frac{\overline{J}}{2}\left(\sum_{i=1}^{N-1}( u_{i}^l \Delta_{i})\delta_{lm} - 2\Delta_{l}\delta_{m(l+1)}  - 2\Delta_{l-1}\delta_{m(l-1)} \right), \label{eq:first_order2}
\end{equation}
with the coefficients $u^{l}_{i} \in \{\pm 1\}$. For example, $\cH_{\delta}$ for a $4$-qubit spin chain is a tridiagonal matrix of the form,

\begin{equation}
\cH_{\delta} =
\frac{\overline{J}}{2}\left(
\begin{array}{cccc}
 -\Delta_{1}-\Delta_{2}+\Delta_{3} & -2 \Delta_{3} & 0 & 0 \\
 -2 \Delta_{3} & -\Delta_{1}+\Delta_{2}+\Delta_{3} & -2 \Delta_{2} & 0 \\
 0 & -2 \Delta_{2} & \Delta_{1}+\Delta_{2}-\Delta_{3} & -2 \Delta_{1} \\
 0 & 0 & -2 \Delta_{1} &  \Delta_{1}-\Delta_{2}-\Delta_{3} \\
\end{array}
\right) . \nonumber
\end{equation}

Substituting the form of $\cH_{\delta}$ in Eq.~\eqref{eq:first_order2} to the first order term in Eq.~\eqref{eq:transAmp_disorder}, and setting  $\overline{J}=1$ throughout, we get, 
\begin{eqnarray}
f^{N}_{r,s}(t,\{\Delta_{k}\})= f_{r,s}^{N}(t) &-& \frac{i}{2} \int_{0}^{t} \sum_{l,k=1}^{N} f^{N}_{r,k}(t)(f^{N}_{k,l}(t'))^{*}f^{N}_{l,s}(t')(\sum_{i=1}^{N-1}u_{i}^{l}\Delta_{i})dt'-\frac{i}{2} \int_{0}^{t} \sum_{l=1}^{N-1}\sum_{k=1}^{N} f^{N}_{r,k}(t)(f^{N}_{k,l}(t'))^{*}f^{N}_{l+1, s}(t')(-2\Delta_{l})dt' \nonumber \\ 
&-& \frac{i}{2} \int_{0}^{t} \sum_{l=1}^{N-1}\sum_{k=1}^{N} f^{N}_{r,k}(t)(f^{N}_{k,l+1}(t'))^{*}f^{N}_{l,s}(t')(-2\Delta_{l})dt' . \nonumber
 \end{eqnarray}
Thus, up to first order in perturbation, $f^{N}_{r,s}( t,\{\Delta_{k}\})$ is simply a linear combination of the random variables $\{\Delta_{k}\}$, of the form,
\begin{equation}
 f^{N}_{r,s}(t,\{\Delta_{k}\}) = f^{N}_{r,s}(t)+ \sum_{i=1}^{N-1} c^{N}_{i}(t) \Delta_{i}, \label{eq:transAmp_final2}
 \end{equation}
where $\{c^{N}_{i}(t)\}$ are complex coefficients given by,
\begin{equation}
c^{N}_{i}(t) = -\frac{i}{2}\sum_{k=1}^{N}f^{N}_{r,k}(t)\left[ \int_{0}^{t}  \sum_{l=1}^{N} u_{i}^{l} (f^{N}_{k,l}(t'))^{*}f^{N}_{l,s}(t')dt' -2 \int_{0}^{t} ( f^{N}_{k,i}(t'))^{*}f^{N}_{i+1, s}(t')dt' -2 \int_{0}^{t}(f^{N}_{k,i+1}(t'))^{*}f^{N}_{i,s}(t')dt' \right]. \label{eq:c-coeff}
\end{equation}

We first note that in the limit of large $N$, the distribution of $f^{N}_{r,s}(t)$ tends towards a normal distribution. This is a direct consequence of the central limit theorem, since $\{\Delta_{i}\}$ are i.i.d random variables. In what follows, we will obtain the exact form of the distribution of $f^{N}_{r,s}(t,\{ \Delta_{k} \})$, specifically, the real and imaginary parts of $f^{N}_{r,s}(t,\{ \Delta_{k} \})$ in terms of $N, t$ and $\delta$. 

Since the $\{\Delta_{i}\}$ are randomly drawn from a uniform distribution between $\left [ -\delta,\delta \right ]$, the joint probability density $P\left(\Delta_{1},\Delta_{2}, \ldots, \Delta_{N} \right)$ is given by,
\begin{equation}
 P \left(\, \Delta_{1},\Delta_{2}, \ldots, \Delta_{N-1} \,\right) = \left\lbrace \begin{array}{cc}
 \frac{1}{(2\delta)^{N-1}}, &  -\delta \leq \Delta_{i} \leq \delta, \; \forall i=1,2,\ldots, N-1. \\
 0, & {\rm otherwise}. 
  \end{array} \right.
\end{equation}
Let $x \equiv \texttt{Re}[f^{N}_{r,s}(t,\{\Delta_{k}\})]$ and $y \equiv \texttt{Im}[f^{N}_{r,s}(t,\{\Delta_{k}\})]$ denote the real and imaginary parts of the transition amplitude in Eq.~\eqref{eq:transAmp_final2}. Then, we may obtain the distribution of $x$ and $y$ as follows:
\begin{eqnarray}
\cP^{\delta,t,N}(x) &=& \int_{\Delta_{1}=-\delta}^{\delta}\ldots\int_{\Delta_{N-1}=-\delta}^{\delta} \left(\prod_{i=1}^{N-1}d\Delta_{i}\right) P(\Delta_{1},\Delta_{2},\ldots,\Delta_{N-1}) \, \delta\left( x - ( \, \texttt{Re}[f^{N}_{r,s}(t)] + \sum_{i=1}^{N-1}\texttt{Re}[c^{N}_{i}(t)]\Delta_{i})\right), \nonumber \\
\cP^{\delta,t,N}(y) &=& \int_{\Delta_{1}=-\delta}^{\delta}\ldots\int_{\Delta_{N-1}=-\delta}^{\delta} \left(\prod_{i=1}^{N-1}d\Delta_{i}\right) P(\Delta_{1},\Delta_{2},\ldots,\Delta_{N-1}) \, \delta\left( y - (\, \texttt{Im}[f^{N}_{r,s}(t)] + \sum_{i=1}^{N-1}\texttt{Im}[c^{N}_{i}(t)] \Delta_{i}) \right). \nonumber 
\end{eqnarray}
Replacing the Dirac delta functions with their Fourier transforms, and then integrating out the $\{\Delta_{k}\}$ variables, we get,
\begin{eqnarray}
\cP^{\delta,t,N}(x) &=& \frac{1}{\sqrt{2\pi}(2\delta)^{N-1}}\int_{\Delta_{1}=-\delta}^{\delta}\ldots\int_{\Delta_{N-1}= -\delta}^{\delta} \int_{k=-\infty}^{\infty} \prod_{i=1}^{N-1}d\Delta_{i} dk \exp{\left(-i k \left( x - \left[ \, \texttt{Re}[f^{N}_{r,s}(t)]+\sum_{i=1}^{N-1}\texttt{Re}[c^{N}_{i}(t)]\Delta_{i}\right]\right)\right)}  \nonumber \\
 &=& \frac{1}{\sqrt{2\pi}(2\delta)^{N-1}} \int_{k=-\infty}^{\infty} dk \exp{\left(-i k ( \, x - \texttt{Re}[f^{N}_{r,s}(t)] \,) \right)} \prod_{i=1}^{N-1} \frac{2\sin \left( k\delta\,\texttt{Re}[c^{N}_{i}(t)] \right)}{k\,\texttt{Re}[c^{N}_{i}(t)]}    \label{eq:dist_real1} \\
&=& \frac{1}{\sqrt{2\pi}(2\delta)^{N-1}} \int_{k=-\infty}^{\infty} dk \exp{\left(-i k ( \, x - \texttt{Re}[f^{N}_{r,s}(t)] \,) \right)} \prod_{i=1}^{N-1} \frac{e^{i \left( k\delta\,\texttt{Re}[c^{N}_{i}(t)] \right)}-e^{-i \left( k\delta\,\texttt{Re}[c^{N}_{i}(t)] \right)}}{i k\,\texttt{Re}[c^{N}_{i}(t)]} \nonumber  \\ \nonumber
&=& \frac{1}{\sqrt{2\pi}(2\delta)^{N-1}} \int_{k=-\infty}^{\infty} dk \exp{\left(-i k ( \, x - \texttt{Re}[f^{N}_{r,s}(t)] \,) \right)}  \frac{\sum_{j=1}^{2^{N-1}} (-1)^{\alpha_{j}}e^{i (k\,\delta\sum_{i=1}^{N-1} (-1)^{r_{i}^{j}}\texttt{Re}[c^{N}_{i}(t)])}}{(i k)^{N-1}\,\prod_{i=1}^{N-1}\texttt{Re}[c^{N}_{i}(t)]}, \nonumber  
\end{eqnarray}
where, $\alpha_{j}, r_{i}^{j} \in [0,1], \; \forall i,j$. Simplifying further, we get,
\begin{eqnarray}
\cP^{\delta,t,N}(x) &=& \frac{1}{\sqrt{2\pi}(2\delta)^{N-1}\prod_{i=1}^{N-1}\texttt{Re}[c^{N}_{i}(t)]} \int_{k=-\infty}^{\infty} dk \frac{\sum_{j=1}^{2^{N-1}} (-1)^{\alpha_{j}}\exp{\left(-i k ( \, x - \texttt{Re}[f^{N}_{r,s}(t)] + \delta \sum_{i=1}^{N-1}(-1)^{r^{j}_{i}}\texttt{Re}[c^{N}_{i}(t)] ) \right)}}{{(ik)}^{N-1}} \nonumber \\ 
& =& \left(\frac{1}{(2\delta)^{N-1}}\right)\left(\frac{1}{\prod_{i=1}^{N-1}\texttt{Re}[c^{N}_{i}(t)]}\right) \sum_{j=1}^{2^{N-1}} (-1)^{u_{j}}(q_{j})^{N-2}\,{\rm Sign}[q_{j}], \label{eq:dist_real2}
\end{eqnarray}
where $u_{j} \in [0,1]$, and $q_{j} (x,\texttt{Re}[f^{N}_{r,s}(t)], \{\texttt{Re}[c^{N}_{i}(t)]\})$ are linear combinations of the form,
\begin{equation}
q_{j} \equiv x - \texttt{Re}[f^N_{r,s}(t)] + \delta\sum_{i=1}^{N-1} (-1)^{r_{i}^{j}}\texttt{Re}[c^{N}_{i}(t)]  , \; r_{i}^{j}\in [0,1], \; \forall i=1,\ldots, N-1. 
\end{equation} 
We may evaluate the distribution of the imaginary part of the transition amplitude in a similar fashion, to get,
\begin{equation}
\cP^{\delta,t,N}(y) = \left(\frac{1}{(2\delta)^{N-1}}\right)\left(\frac{1}{\prod_{i=1}^{N-1}\texttt{Im}[c^{N}_{i}(t)]}\right) \sum_{i=1}^{2^{N-1}} (-1)^{u_{j}}(\tilde{q}_{j})^{N-2}\,{\rm Sign}[\tilde{q}_{j}], \label{eq:dist_im2}
\end{equation}
where the $\tilde{q}_{j} (x,\texttt{Im}[f^{N}_{r,s}(t)], \{\texttt{Im}[c^{N}_{i}(t)]\})$ are linear combinations of the form,
\begin{equation}
\tilde{q}_{j} \equiv y - \texttt{Im}[f^{N}_{r,s}(t)] + \delta\sum_{i=1}^{N-1} (-1)^{r_{i}^{j}}\texttt{Im}[c^{N}_{i}(t)]  , \; r_{i}^{j}\in [0,1], \; \forall i=1,\ldots, N-1. 
\end{equation}
We see from Eq.~\eqref{eq:dist_real1} that the limiting distribution in the case of no disorder ($\delta \rightarrow 0$), is indeed a delta distribution peaked around $\texttt{Re}[f^{N}_{r,s}(t)]$:
\begin{equation}
\lim_{\delta\rightarrow 0}\cP^{\delta,t,N}(x) = \frac{1}{\sqrt{2\pi}}\int_{k=-\infty}^{\infty} dk \exp{\left(-i k ( \, x - \texttt{Re}[f^{N}_{r,s}(t)] \,) \right)} = \delta\left( x - \texttt{Re}[f^{N}_{r,s}(t)]\right).
\end{equation}

\begin{figure}  [H]
\centering
\begin{subfigure}{
\includegraphics[width=0.47\textwidth]{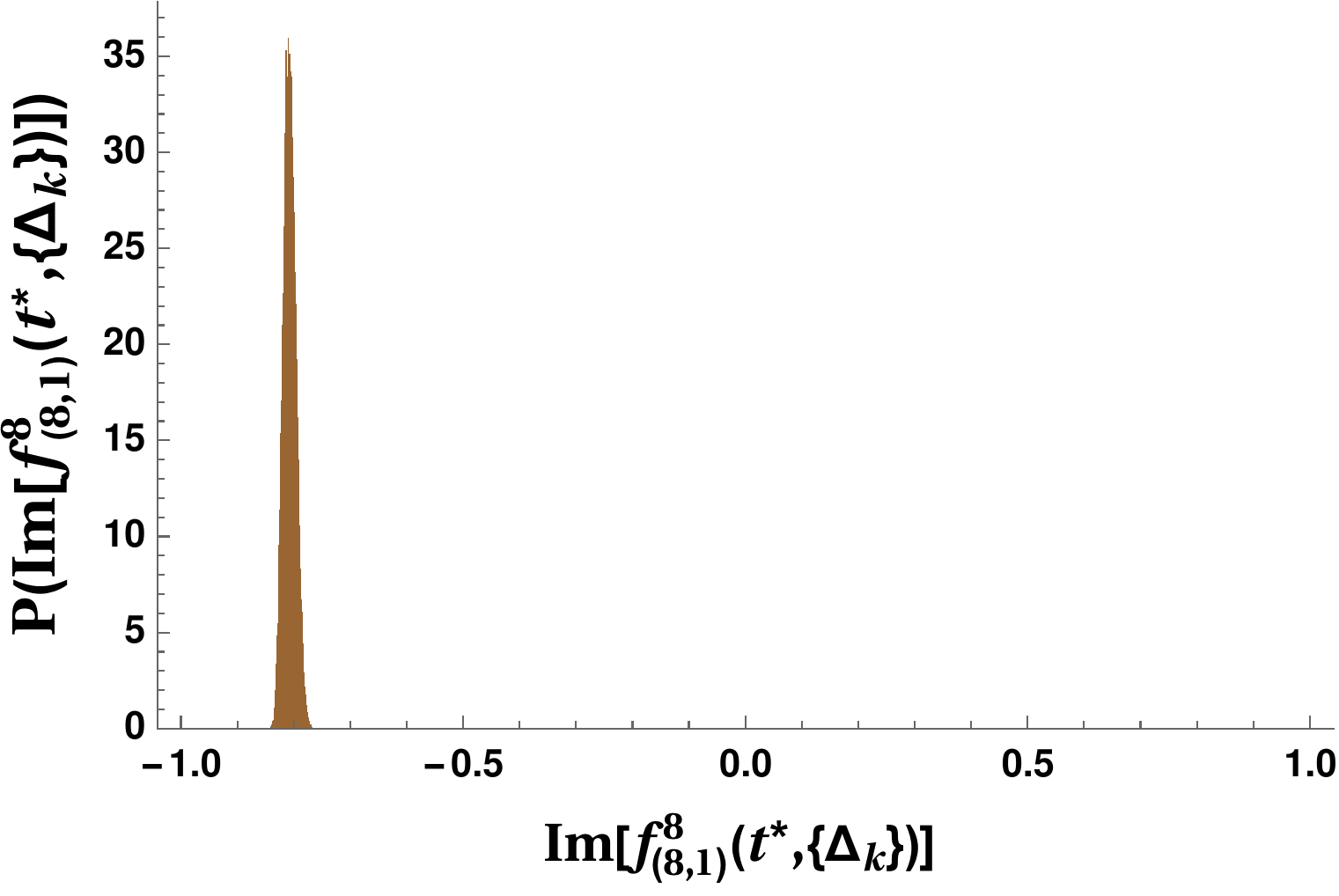}
}
\end{subfigure}
\begin{subfigure}{
\includegraphics[width=0.47\textwidth]{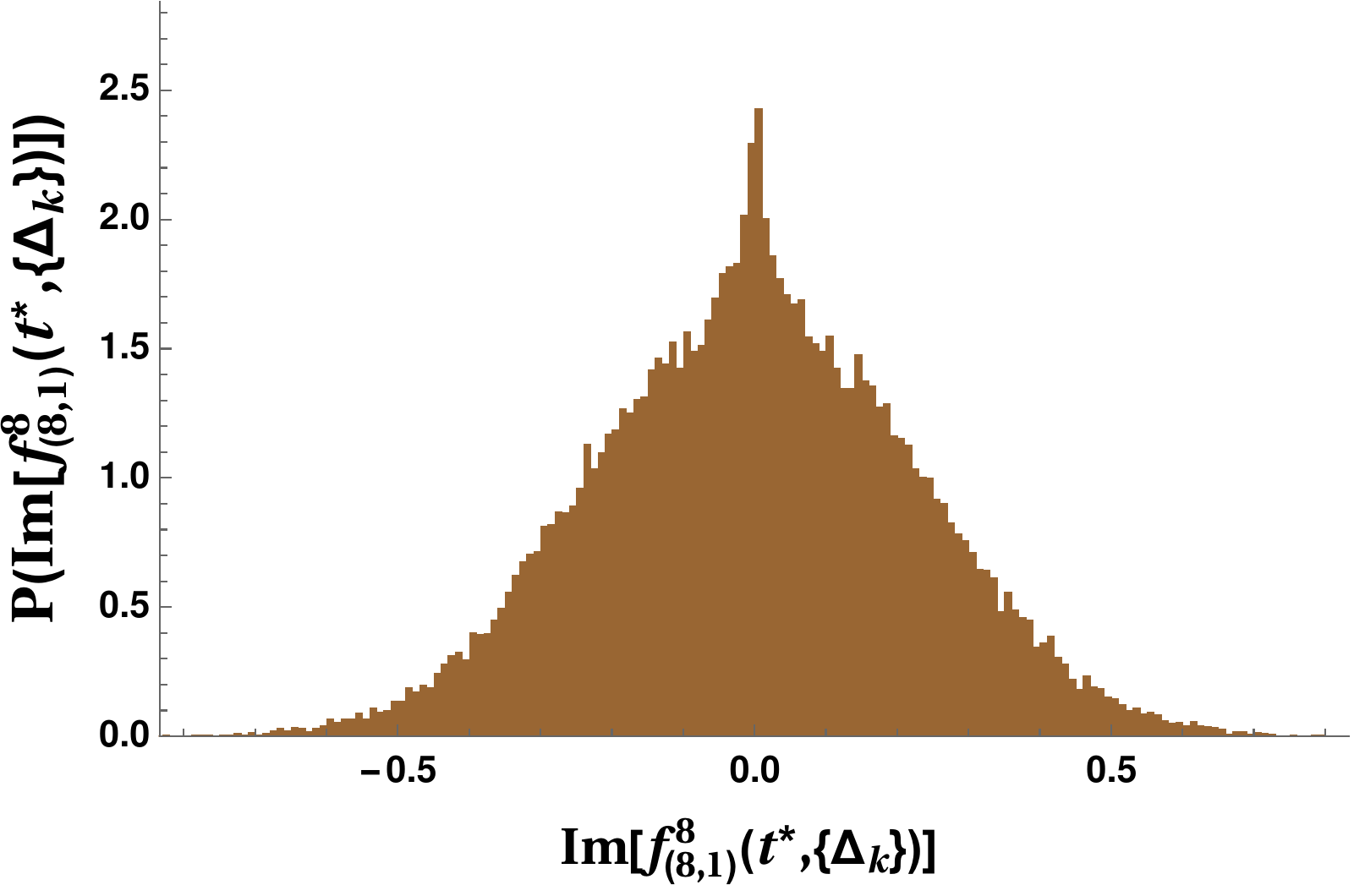}
}
\end{subfigure}

\caption{Distribution of $\texttt{Im}[f_{8,1}^{8}(t^{*}, \{\Delta_{k}\})]$ 
over different disorder realizations drawn from a uniform distribution with disorder strengths $\delta = 0.001$ and $\delta=1$, respectively.}
\label{fig:f_dist_Im}
\end{figure}




Finally, we compute the disorder-averaged value of the transition amplitude 
upto $O(\cH_{\delta}^{2})$. We first modify the expression in Eq.~\eqref{eq:transAmp_final2} to include the second-order perturbation terms:
\begin{equation}
 f^{N}_{r,s}(t,\{\Delta_{k}\}) = f_{r,s}^{N}(t)+ \sum_{i=1}^{N-1}c^{N}_{i}(t) \Delta_{i} + \sum_{i,j=1}^{N-1}d^{N}_{ij} \Delta_{i}\Delta_{j} + \ldots , \label{eq:transAmp_final3}
 \end{equation}
where $\{d^{N}_{ij}\}$ are complex coefficients which are convolutions of the zero-disorder transition amplitude, similar to $\{c^{N}_{i}(t)\}$. Next, using the fact that the random couplings $\{\Delta_{i}\}$ are drawn from a uniform distribution, we obtain,
\begin{eqnarray}\label{eq:trans_ampAvg}
\langle f^{N}_{r,s}(t,\{\Delta_{k}\}) \rangle_{\delta}&=&\frac{1}{(2\delta)^{N-1}}\int_{-\delta}^{\delta} \left( f^{N}_{r,s}(t) + \sum_{i=1}^{N-1}c^{N}_{i}(t)\Delta_{i} +\sum_{l,m=1}^{N-1} d^{N}_{lm}(t) \Delta_{l}\Delta_{m}+\ldots \right) \prod_{i=1}^{N-1}d\Delta_{i} \nonumber \\ 
&=& f^{N}_{r,s}(t)+ \frac{\delta^{2}}{3}\sum_{i}d^{N}_{ii}(t) + O(\delta^{4}) .
\end{eqnarray}
The second moment of $f^{N}_{r,s}(t,\{\Delta_{k}\})$,
\begin{eqnarray}\label{eq:2ndmoment}
\left\langle \, (f^{N}_{r,s}(t,\{\Delta_{k}\}))^{2} \, \right\rangle_{\delta}&=&\frac{1}{(2\delta)^{N-1}} \int_{-\delta}^{\delta} \left( f^{N}_{r,s}(t) + \sum_{i=1}^{N-1}c^{N}_{i}(t)\Delta_{i} +\sum_{l,m=1}^{N-1} d^{N}_{lm}(t) \Delta_{l}\Delta_{m})+\ldots \right)^{2} \prod_{i=1}^{N-1}d\Delta_{i} \nonumber \\ 
&=& (f^{N}_{r,s}(t))^{2} + \frac{\delta^{2}}{3}\left( 2 f^{N}_{r,s}(t) \sum_{l=1}^{N-1}d^{N}_{ll}(t)+\sum_{j=1}^{N-1}(c^{N}_{j}(t))^{2}\right) \nonumber \\
&&  + \frac{\delta^{4}}{5}\sum_{l=1}^{N-1}(d^{N}_{ll}(t))^{2}+ \frac{\delta^{4}}{9}\sum_{l \neq m=1}^{N-1}(d^{N}_{lm}(t))^{2} + O(\delta^{6}).
\end{eqnarray}
We can now calculate the variance from Eq~\ref{eq:trans_ampAvg} and Eq~\ref{eq:2ndmoment} as follows:
 \begin{eqnarray}\label{eq:variance}
{\rm Var}[f^{N}_{r,s}(t,\{\Delta_{k}\})] &=& \langle \, (f^{N}_{r,s}(t,\{\Delta_{k}\}))^{2} \, \rangle_{\delta} - \langle f^{N}_{r,s}(t,\{\Delta_{k}\}) \rangle^{2}_{\delta} \nonumber \\
&=& \frac{\delta^{2}}{3}\sum_{j=1}^{N-1}(c^{N}_{j}(t))^{2}+\delta^{4}\left(\frac{1}{5}\sum_{l=1}^{N-1}(d^{N}_{ll}(t))^{2}+ \frac{1}{9}\sum_{l \neq m=1}^{N-1}(d^{N}_{lm}(t))^{2}-\frac{1}{9}\left(\sum_{l=1}^{N-1}d^{N}_{ll}(t)\right)^{2} \right)  \nonumber \\ &&+ O(\delta^{6}). 
 \end{eqnarray}
To summarize, from Eq.~\eqref{eq:trans_ampAvg} we see that as  $\delta \rightarrow 0$, $\langle f^{N}_{r,s}(t,\{\Delta_{k}\}) \rangle_{\delta}$  approaches the zero-disorder value $f^{N}_{r,s}(t)$. As expected, the variance given in Eq.~\eqref{eq:variance} vanishes in this limit. However as the disorder strength $\delta$ increases, $\langle f^{N}_{r,s}(t,\{\Delta_{k}\}) \rangle_{\delta}$ deviates from the no-disorder case, and the variance also starts growing since terms of $O(\delta^{2})$ become increasingly significant now. 
 
\end{widetext}

%
%
%
%
%
%

\end{document}